\documentclass[letterpaper,11pt]{article}
\usepackage[hmargin=1in,vmargin=1in]{geometry}
\usepackage{amsmath, amssymb}
\usepackage{graphicx}
\usepackage{algorithmic}
\usepackage{algorithm}

\title{Santa Claus Schedules Jobs on Unrelated Machines}
\date{\today}

\author{Ola Svensson  ({\tt osven@kth.se}) \\ Royal Institute of Technology - KTH\\ Stockholm, Sweden}

\newtheorem{theorem}{Theorem}[section]

\newtheorem{observation}[theorem]{Observation}
\newtheorem{lemma}[theorem]{Lemma}
\newtheorem{claim}[theorem]{Claim}

\newtheorem{definition}[theorem]{Definition}


\newenvironment{proofclaim}{\begin{trivlist}
    \item[\hskip\labelsep {\it Proof of Claim}.]}{\QED \end{trivlist}}
\newenvironment{proof}{\begin{trivlist}
    \item[\hskip\labelsep {\bf Proof}.]}{\QED \end{trivlist}}

\newcommand{\QED}{\hfill $\square$}

\newcommand{\hide}[1]{
}





\newcounter{myclaim}
\setcounter{myclaim}{0}

\newcommand{\MA}{\ensuremath{\mathcal{M}}}
\newcommand{\MAB}{\ensuremath{\mathcal{M}_{{B}}}}
\newcommand{\MAS}{\ensuremath{\mathcal{M}_{{S}}}}
\newcommand{\MAM}{\ensuremath{\mathcal{M}_M}}
\newcommand{\JO}{\ensuremath{\mathcal{J}}}
\newcommand{\CompTree}{\ensuremath{\mathcal{T}}}

\newcommand{\CLP}{\ensuremath{\mbox{[C-LP]}}}

\newcommand{\conff}[1]{\ensuremath{\mathcal{C}(#1)}}

\newcommand{\val}{\ensuremath{\textnormal{Val}}}

\begin{document}
\maketitle

\begin{abstract}
  One of the classic results in scheduling theory is the
  $2$-approximation algorithm by Lenstra, Shmoys, and Tardos for the
  problem of scheduling jobs to minimize makespan on unrelated
  machines, i.e., job $j$ requires time $p_{ij}$ if processed on
  machine $i$. More than two decades after its introduction it is
  still the algorithm of choice even in the restricted model where
  processing times are of the form $p_{ij} \in \{p_j, \infty\}$.  This
  problem, also known as the restricted assignment problem, is NP-hard
  to approximate within a factor less than $1.5$ which is also the
  best known lower bound for the general version.

  Our main result is a polynomial time algorithm that estimates the
  optimal makespan of the restricted assignment problem within a
  factor $33/17 + \epsilon \approx 1.9412 + \epsilon$, where $\epsilon
  > 0$ is an arbitrarily small constant.  The result is obtained by
  upper bounding the integrality gap of a certain strong linear
  program, known as configuration LP, that was previously successfully
  used for the related Santa Claus problem.  Similar to the strongest
  analysis for that problem our proof is based on a local search
  algorithm that will eventually find a schedule of the mentioned
  approximation guarantee, but is not known to converge in polynomial
  time.
\end{abstract}

\section{Introduction}
Scheduling on unrelated machines is the model where we are given a set
$\JO$ of  jobs to be processed without interruption on a set $\MA$ of
unrelated machines, where the time a machine $i\in \MA$ needs
to process a job $j\in \JO$ is specified by a machine and job
dependent processing time $p_{ij} \geq 0$. When considering a
scheduling problem the most common and perhaps most natural objective
function is makespan minimization. This is the problem of finding
 a schedule, also called an assignment, $\sigma: \JO\mapsto \MA$
so as to minimize the time $\max_{i\in \MA} \sum_{j\in \sigma^{-1}(i)}
p_{ij}$ required to process all the jobs.  

A classic result in scheduling theory is Lenstra, Shmoys, and Tardos'
$2$-approximation algorithm for this basic problem~\cite{LS90}. 
Their approach is based on several nice structural properties of the
extreme point solutions of a natural linear program and has become a
text book example of such techniques (see, e.\,g.,~\cite{V01}).
Complementing their positive result they also proved that the problem
is NP-hard to approximate within a factor less than $1.5$ even in the
restricted case when $p_{ij} \in \{p_j, \infty\}$ (i.\,e., when job $j$
has processing time $p_j$ or $\infty$ for each machine).  This problem
is also known as the restricted assignment problem and, although it
looks easier than the general version, the algorithm of choice has
been the same $2$-approximation algorithm as for the general version.

Despite being a prominent open problem in scheduling theory, there has
been very little progress on either the upper or lower bound since the
publication of~\cite{LS90} over two decades ago.  One of the biggest
hurdles for improving the approximation guarantee has been to obtain a
good lower bound on the optimal makespan. Indeed, the considered
linear program has been useful for generalizations such as introducing
job and machine dependent costs~\cite{ST93,Singh08} but is known to
have an integrality gap of~$2-1/|\MA|$ even in the restricted case. We
note that Shchepin and Vakhania~\cite{SV05} presented a rounding
achieving this gap slightly improving upon the approximation ratio of
$2$.

In a relatively recent paper, Ebenlendr et al.~\cite{EKS08} overcame
this issue in the special case of the restricted assignment problem
where a job can be assigned to at most two machines. Their strategy
was to add more constraints to the studied linear program, which
allowed them to prove a $1.75$-approximation algorithm for this
special case that they named Graph Balancing. The name arises
naturally when interpreting the restricted assignment problem as a
hypergraph with a vertex for each machine and a hyperedge $\Gamma(j)
= \{i\in \MA:p_{ij} = p_j\}$ for each job $j\in \JO$ that is incident
to the machines it can be assigned to. As pointed out by the authors
of~\cite{EKS08} it seems difficult to extend their techniques to hold
for more general cases. In particular, it can be seen that the
considered linear program has an integrality gap of $2$ when we allow
jobs that can be assigned to $3$ machines.

In this paper we overcome this obstacle by considering a certain
strong linear program, often referred to as configuration LP.  In
particular, we obtain the first asymptotic improvement on the approximation factor of $2$.

\begin{theorem}
  \label{thm:mainintro}
  There is a polynomial time algorithm that estimates the optimal
  makespan of the restricted assignment problem within a factor of
  $33/17 + \epsilon\approx 1.9412 + \epsilon$, where $\epsilon>0$ is
  an arbitrarily small constant.
\end{theorem}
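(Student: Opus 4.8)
The plan is to prove Theorem~\ref{thm:mainintro} in two stages: a short algorithmic reduction, and a structural integrality‑gap bound that is the real content. For the reduction I would work with the \emph{configuration LP}: for a candidate makespan $T$ it has a variable $x_{i,C}\ge 0$ for every machine $i$ and every ``configuration'' $C$, i.e.\ every set of jobs assignable to $i$ with $\sum_{j\in C}p_j\le T$, the constraints $\sum_C x_{i,C}\le 1$ for each machine and $\sum_{i}\sum_{C\ni j}x_{i,C}\ge 1$ for each job. This LP is a relaxation of the scheduling problem, and although it has exponentially many variables it can be solved to within a factor $1+\epsilon$ in polynomial time by running the ellipsoid method on its dual, whose separation problem is a knapsack‑type problem admitting an FPTAS; binary searching over $T$ produces a value $T^\ast$ with $T^\ast\le\mathrm{OPT}$ and $\mathrm{OPT}$ at most $(1+\epsilon)$ times the least $T$ for which the LP is feasible. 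Hence it suffices to prove the purely combinatorial statement: \emph{whenever the configuration LP is feasible for makespan $T$, there exists an assignment of makespan at most $\tfrac{33}{17}\,T$}, and then output $T^\ast$ (suitably rescaled) as the estimate.

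To prove this gap bound I would scale so that $T=1$ and split the jobs into \emph{small} jobs, those with $p_j$ below a threshold $\delta=\delta(\epsilon)$, and \emph{big} jobs above it. Small jobs should be the easy case: once the big jobs have been placed so that every machine still has slack of roughly $\tfrac{33}{17}-O(\delta)$, the fractional assignment of the small jobs provided by the configuration LP can be rounded by the classical extreme‑point / bipartite‑matching argument of Lenstra--Shmoys--Tardos, adding only $O(\delta)$ load to each machine. So the heart of the matter is to assign the big jobs, respecting the restriction sets $\Gamma(j)$, so that the total big‑job load on every machine stays below the right threshold; the precise constant $\tfrac{33}{17}$ is what comes out of optimizing, inside the case analysis, the trade‑off between how much big‑job load a machine is allowed to carry and how much room has to be reserved for the remaining (small and medium) jobs.

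The main obstacle, and the step I expect to be genuinely hard, is the assignment of the big jobs, which I would carry out by the local‑search / alternating‑tree method that underlies the strongest analysis of the Santa Claus problem. Starting from an arbitrary partial assignment of the big jobs, as long as some machine is overloaded I would try to move one of its big jobs to another machine, recursively relocating the big jobs that ``block'' such a move; this grows a layered structure of blocked machines. The key lemma, which is where the configuration LP is used essentially, is that once this structure is large enough a volume / averaging argument over the LP configurations of the machines in it forces the existence of a usable move --- one that either directly relieves an overload or enlarges the structure in a way that strictly increases a bounded potential, such as a lexicographically ordered vector counting blocked machines by job‑size class. Since the potential cannot increase forever, the process terminates with a valid assignment. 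The delicate points will be choosing the job‑size thresholds and the exact definition of ``blocked'' so that the LP‑averaging argument goes through with the constant $\tfrac{33}{17}$, and controlling the interaction between several size classes of big jobs; and, as in the Santa Claus setting, termination is guaranteed but the number of iterations is not polynomially bounded, which is exactly why Theorem~\ref{thm:mainintro} claims only a polynomial‑time \emph{estimate} of the optimal makespan rather than a polynomial‑time construction of the schedule.
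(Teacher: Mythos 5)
Your high-level architecture matches the paper's: solve the configuration LP approximately via the ellipsoid method with a knapsack separation oracle, reduce the theorem to an integrality-gap bound, and prove that bound by a local search over an alternating/blocker tree that terminates via a lexicographic potential but not in polynomial time. However, the two-phase decomposition you propose --- first place big jobs, then round the remaining small jobs by a Lenstra--Shmoys--Tardos extreme-point argument --- is a genuine departure from the paper and has a gap you do not address. After you have integrally placed the big jobs in some way produced by local search, the configuration LP's fractional assignment of small jobs need not be feasible for the \emph{residual} capacities on the machines (the LP correlates its small-job mass with its \emph{fractional} big-job mass, not with your integral placement), so there is no fractional small-job assignment sitting there waiting to be rounded. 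You would first have to establish feasibility of a residual small-job LP under your big-job placement, which is itself a nontrivial structural claim and precisely the sort of thing that fails for the weaker assignment LP. The paper avoids this entirely: there is no separate small-job phase. Its local search inserts jobs of \emph{all} sizes one at a time into a single evolving partial schedule, and small jobs are first-class participants in the blocker tree --- in particular, when small jobs block a move of a big job, the algorithm must move those small jobs recursively, and the amortization in the dual bound explicitly pairs small blockers arising from small moves with small blockers arising from huge-to-small moves.

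Two further points of divergence worth noting. First, your size threshold $\delta=\delta(\epsilon)$ is $\epsilon$-dependent, whereas the paper uses fixed thresholds ($9/17$, $11/17$, $14/17$) independent of $\epsilon$; the $\epsilon$ in the theorem comes solely from the LP-solving accuracy, not from a job-size cutoff, and indeed a vanishing threshold would not interact correctly with the constant $33/17$. Second, the ``key lemma'' in the paper is not an averaging argument over LP configurations of the blocked machines; it is a Farkas-type certificate: if no potential move exists, one exhibits an explicit feasible solution $(y^*,z^*)$ to the dual of the configuration LP with strictly negative objective value, which proves the primal infeasible. These are related through LP duality, but ``averaging over configurations'' suggests a primal argument, and it is not obvious how to make that version give the constant; the dual construction (rounding big/medium job sizes down to $11/17$ and $9/17$ in the $z^*$ variables, charging $y^*_i=1$ on small-blocker machines and amortizing) is where the precise constant $33/17$ is actually extracted.
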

We note that our proof gives a local search algorithm to also find a schedule with performance guarantee 
 $\frac{33}{17}$ but it is not known to converge in
polynomial time.

Our techniques are based on the recent development on the related
Santa Claus problem. In the Santa Claus problem we are given the same
input as in the considered scheduling problem but instead of wanting to
minimize the maximum we wish to maximize the minimum, i.\,e., to find an
assignment $\sigma$ so as to maximize $\min_{i\in \MA} \sum_{j\in
  \sigma^{-1}(i)} p_{ij}$. The playful name now follows from associating the
machines with kids and jobs with presents. Santa Claus' problem
then becomes to distribute the presents so as to make the least happy kid as
happy as possible.

The problem was first considered under this name by Bansal and
Sviridenko~\cite{BS06}. They formulated and used the configuration
LP to obtain an $O(\log\log \log |\MA| / \log \log |\MA|)$-approximation
algorithm for the restricted Santa Claus problem, where $p_{ij} \in
\{p_j, 0\}$. They also proved several structural properties that were
later used by Feige~\cite{Feige08} to prove that the integrality gap of the
configuration LP is in fact  constant in the restricted case. The
proof is based on repeated use of Lov\'{a}sz local lemma and was only
recently turned into a polynomial time algorithm~\cite{HSS10}.

The approximation guarantee obtained by
combining~\cite{Feige08} and~\cite{HSS10} is a large constant and the techniques do
not seem applicable to the considered problem. This is because
the methods rely on structural properties that are obtained by
rounding the input and such a rounding applied to the scheduling problem
would rapidly eliminate any advantage obtained over the current
approximation ratio of $2$.
Instead, our techniques are mainly inspired by a paper of Asadpour et
al.~\cite{AFS08} 
who gave a tighter analysis of the configuration LP for the restricted
Santa Claus problem. More specifically, they proved that the
integrality gap is lower bounded by $1/4$ by designing a local search
algorithm that eventually finds a solution with the mentioned
approximation guarantee, but is not known to converge in polynomial
time.

Similar to their approach, we formulate the configuration LP and show
that its integrality gap is upper bounded by $33/17$ by designing a
local search algorithm. As the configuration LP can be solved in
polynomial time up to any desired accuracy~\cite{BS06}, this implies
Theorem~\ref{thm:mainintro}. Although we cannot prove that the local
search converges in polynomial time, our results imply that the
configuration LP gives a polynomial time computable lower bound on the
optimal makespan that is strictly better than two. We emphasize
that all the results related to hardness of approximation remain
valid even for estimating the optimal makespan.  

Before proceeding, let us mention that unlike the restricted
assignment problem, the special case of uniform machines and that of a
fixed number of machines are both significantly easier to approximate
and are known to admit polynomial time approximation
schemes~\cite{HS88,HS76,JP2001}. Also scheduling jobs on unrelated
machines to minimize weighted completion time instead of makespan has
a better approximation algorithm with performance guarantee
$1.5$~\cite{SS02}. The Santa Claus problem has also been studied under
the name Max-Min Fair Allocation and there have been several recent
results for the general version of the problem (see e.g.~\cite{AS10,BCV09,CCK09}).

Compared to~\cite{AFS08}, our analysis is more complex and relies on
the special structure of the dual of the linear program. 
To illustrate
the main techniques, we have therefore chosen to first present the
analysis for the case of only two job sizes
(Section~\ref{sec:simple}) 
 followed by the general case in Section~\ref{sec:mainalgo}.

\section{Preliminaries}
\label{sec:prelim}
As we consider the restricted case with $p_{ij} \in \{p_j, \infty\}$,
without ambiguity, we refer to $p_j$ as the size of job $j$. For a
subset $\JO' \subseteq \JO$ we let $p\left(\JO'\right) = \sum_{j\in
  \JO'} p_j$ and often write $p(j)$ for
$p(\{j\})$, which of course equals $p_j$.

We  now give the definition of the configuration LP for the
restricted assignment problem.  Its
intuition is that a solution to the scheduling problem with
makespan $T$ assigns a set of jobs, referred to as a
configuration, to each machine of total processing time at most $T$.
Formally, we say that a subset $C \subseteq \JO$ of jobs is a
configuration for a machine $i\in \MA$ if it can be assigned without
violating a given target makespan $T$, i.e., $C\subseteq \{j: i\in \Gamma(j)\}$ and $p(C)
\leq T$.  Let $\mathcal{C}(i, T)$ be the set of configurations for
machine $i\in \MA$ with respect to the target makespan $T$.
The configuration LP has a variable $x_{i,C}$ for each configuration $C$ for machine $i$ and two sets of constraints:

\begin{equation*}
\text{\CLP} \qquad
\begin{aligned}[t]
  \sum_{C \in \conff{i,T}}x_{i, C}   & \leq 1 & i \in \MA\\
  \sum_{C\ni j}\sum_{i} x_{i,C} & \geq 1 & j \in \JO \\[-2mm]
  x & \geq 0 &  
\end{aligned}
\qquad \qquad
\end{equation*}
The first set of
constraints ensures that each machine is assigned  at most one
configuration 
and the second set of constraints says  that each job should be
assigned (at least) once.

Note that if \CLP{} is feasible with respect to some target makespan
$T_0$ then it is also feasible with respect to all $T\geq T_0$. Let
$OPT_{LP}$ denote the minimum over all such values of $T$.  Since an
optimal schedule of makespan $OPT$ defines a feasible solution to
\CLP{} with $T=OPT$, we have $OPT_{LP} \leq OPT$.  To simplify
notation we will assume throughout the paper that $OPT_{LP} =1$ and
denote $\conff{i,1}$ by $\conff{i}$. This is without loss of generality
since it can be obtained by scaling processing times.

Although \CLP{} might have exponentially many variables, it can be
solved (and $OPT_{LP}$ can be found by binary search) in polynomial time up to any
desired accuracy $\epsilon >0$~\cite{BS06}.
The strategy of~\cite{BS06} is to design a polynomial time separation
oracle for the dual and then solve it using the ellipsoid
method. 
To obtain the dual, we associate a dual variable $y_i$ with $i\in \MA$
for each constraint from the first set of constraints and a dual variable $z_j$ with
$j \in \JO$ for each constraint from the second set of
constraints. Assuming that the objective  of $\CLP{}$ is to
maximize an objective function with zero coefficients then gives the dual:

\begin{equation*}
\text{Dual of \CLP} \qquad
\begin{aligned}[t]
  \min &  \sum_{i\in \MA} y_i - \sum _{j \in \JO} z_j &\\
  y_i&\geq \sum_{j\in C} z_j & i\in \MA, C\in \conff{i} \\
  y,z & \geq  0
\end{aligned}
\end{equation*}
Let us remark that, given a candidate solution $(y^*,z^*)$, the separation oracle has to find a violated
constraint if any in polynomial time and this is just $m$
knapsack problems: for each $i\in \MA$ solve the knapsack problem with
capacity $1$ and an item with weight $p_{j}$ and profit $z_j$ for each
$j\in \JO$ with $i\in \Gamma(j)$. By rounding job sizes as explained
in~\cite{BS06}, we can thus solve \CLP{} in polynomial time up to any
desired accuracy.

\section{Overview of Techniques:  Jobs of Two Sizes}
\label{sec:simple}
We give an overview of the main techniques used by considering the
simpler case when we have jobs of two sizes: \emph{small} jobs of size
$\epsilon$ and \emph{big} jobs of size $1$. Already for this case all previously considered linear programs have an integrality gap of  $2$. In contrast we show the following for~\CLP.

\begin{theorem}
  \label{thm:simple}
  If an instance of the scheduling problem only has jobs of sizes
  $\epsilon\geq 0$ and $1$ then  \CLP{} has integrality gap at most
  $5/3+\epsilon$.
\end{theorem}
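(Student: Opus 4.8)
The plan is to run a local search argument in the spirit of Asadpour–Feige–Saberi, aiming to construct an integral assignment of makespan $\leq 5/3+\epsilon$ whenever \CLP{} is feasible with $OPT_{LP}=1$. First I would set up the structure: since all jobs have size $\epsilon$ or $1$, a machine can hold at most one big job if we insist on makespan close to $1$, but a machine of makespan $5/3$ can hold one big job plus up to $\approx 2/3$ worth of small jobs, or no big job and up to $\approx 5/3$ worth of small jobs. The LP feasibility tells us, via the dual or directly via the configuration constraints, that the bipartite "assignment graph" between big jobs and the machines that can receive them is rich enough to fractionally match all big jobs; I would extract from a basic feasible solution (or from a flow argument) an integral assignment of the big jobs to distinct machines such that the remaining small-job demand can be spread out. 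Concretely, I expect to first assign big jobs greedily/matching-style, then argue that the small jobs can be fractionally assigned to the machines with slack, and round.

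The core of the argument is the local search. I would define a partial assignment together with a set of "blocked" machines and maintain the invariant that every unassigned job is incident only to machines that are "full" in a strong sense. Then I would build an alternating/augmenting structure — call it a tree of machines and jobs reachable by swaps — rooted at a machine that is overloaded beyond $5/3+\epsilon$, and use the configuration LP constraint to show this tree cannot grow forever: the total LP weight $y$-value along the tree (using the dual constraint $y_i \geq \sum_{j\in C} z_j$ for every configuration) would exceed what the LP can support. The key quantitative step is a counting inequality: if every machine in the tree is loaded above $5/3$ in the current integral assignment but the LP packs it with load $\leq 1$, there must be enough "escaping" small or big jobs to extend the tree, and summing the deficits around the tree contradicts $\sum_i y_i - \sum_j z_j = OPT_{LP} = 1$ once $5/3$ is the threshold — this is exactly where the constant $5/3$ (rather than $2$) comes from, and it is the place where the two-sizes restriction is used, since only here can one cleanly say a machine above $5/3$ either carries a big job and more than $2/3$ of small jobs, or more than $5/3$ of small jobs, both of which are "too much" relative to an LP load of $1$.

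The main obstacle I anticipate is making the local-search potential argument terminate: as in \cite{AFS08}, one does not get polynomial convergence for free, so I would need a carefully chosen lexicographic potential (e.g., prioritizing the reduction of the largest overload, then the number of most-overloaded machines, then tree size) and argue it strictly decreases at every swap, so the process halts at a schedule meeting the bound. A secondary technical point is handling the $\epsilon$ slack correctly: small jobs have size exactly $\epsilon$, so a machine "just above $5/3$" is above $5/3$ by at most one $\epsilon$-increment, and I must make sure the thresholds in the tree-growth step are set at $5/3$ versus $5/3+\epsilon$ consistently so that the contradiction with the LP value is genuine. Once termination is established, the resulting schedule has makespan $\leq 5/3+\epsilon$, and since $OPT_{LP}=1$ this proves the integrality gap bound; combined with the polynomial-time solvability of \CLP{} this also yields the estimation algorithm, though only the gap bound is claimed in this theorem.
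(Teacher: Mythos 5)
Your high-level plan (a local search with a tree of ``blocked'' machines and jobs, an infeasibility argument via the dual of \CLP{}, and a lexicographic potential for termination) does match the paper's strategy, but several concrete steps in your sketch are either wrong or miss the ingredient that actually produces the $5/3$ bound.

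First, the two-phase idea of fixing an upfront matching of big jobs to machines and then spreading small jobs does not work and is not used. The configuration LP couples big and small jobs together inside configurations, so there is no clean decoupling: a big-job matching that looks fine a priori may be impossible to complete with small jobs. The paper instead inserts jobs one at a time into a partial schedule that is \emph{always} valid (every machine has load at most $1+R=5/3+\epsilon$), and the tree is rooted at the one unassigned job $j_{new}$, not at an overloaded machine. Your phrase ``rooted at a machine that is overloaded beyond $5/3+\epsilon$'' is inconsistent with the invariant you would need to maintain, and this is not just cosmetic: the whole dual argument is about machines being \emph{heavily loaded but not over the limit}, which is what makes each move ``blocked'' yet keeps the schedule valid.

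Second, the arithmetic that yields $5/3$ is missing. The paper defines an explicit dual solution: $z^*_j=2/3$ for big jobs in the tree, $z^*_j=p_j$ for small jobs in the tree (and for small jobs with no remaining potential moves), $0$ otherwise, and $y^*_i=1$ on machines in small blockers and $y^*_i=\sum_{j\in\sigma^{-1}(i)} z^*_j$ elsewhere. Feasibility of $(y^*,z^*)$ uses only that configurations have total size $\leq 1$ plus the fact that blocked machines are loaded above $R=2/3+\epsilon$; the value computation then needs an amortization step showing that every small blocker created by a big-to-small move is immediately followed by one created by a small move, and pairing them gives $\geq 2$ total $z^*$-mass per pair of machines. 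Without this pairing the dual value is not negative. Your ``$\sum_i y_i - \sum_j z_j = OPT_{LP}=1$'' is also not quite right: the \CLP{} used here has objective $0$, so the dual is always nonnegative when the primal is feasible, and the contradiction one wants is an explicit dual solution with $\sum y^* < \sum z^*$, i.e., unboundedness.

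Third, your proposed termination potential (largest overload, number of most-overloaded machines, tree size) does not apply because no machine is ever overloaded. The paper's potential is the lexicographically ordered vector of the move-values that created the current blockers; adding a blocker appends a finite value before a trailing $\infty$, and when blockers are removed the new move chosen strictly dominates the removed blocker's recorded value because the relevant machine's load (or, for big-to-big blockers, its big-job occupancy) strictly decreased. You would need to reconstruct this move-value ordering (valid $\prec$ small $\prec$ big-to-small $\prec$ big-to-big) to make the argument go through.
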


Throughout this section we let $R=2/3 + \epsilon$. The proof strategy
is to design a local search algorithm that returns a solution
with makespan at most $1+R$, assuming the \CLP{} is feasible. The
algorithm starts with a partial schedule $\sigma$ with no jobs
assigned to any machine. It will then repeatedly call a procedure,
summarized in Algorithm~\ref{algo:SimpleExtSched}, that extends the schedule
by assigning a new job until all jobs are assigned.
When assigning a new job we need to ensure that $\sigma$  will still
have  a makespan of at most $1+R$.  This might require us to also update the
schedule $\sigma$ by moving already assigned jobs. For an example,
consider Figure~\ref{fig:idea} where we have a partial schedule and
wish to assign a new big job $j_{new}$. In the first step we try to
assign $j_{new}$ to $M_1$ but discover that $M_1$ has  too high load, i.e.,
the set of jobs assigned to $M_1$ have total processing time such that
assigning $j_{new}$ to $M_1$ would violate the target makespan $1+R$.
Therefore, in the second step we try to move jobs from $M_1$ to $M_2$
but $M_2$ has also too high load. Instead, we try to move $j_{new}$ to
$M_3$. As $M_3$ already has a big job assigned we need to first
reassign it. We try to reassign it to $M_4$ in the fourth step. In the
fifth step we manage to move small jobs from $M_4$ to $M_3$, which
makes it possible to also move the big job assigned to $M_3$ to $M_4$
and finally assign $j_{new}$ to $M_3$.

\begin{figure*}[hbtp]
   \centering
   \includegraphics[width=14cm]{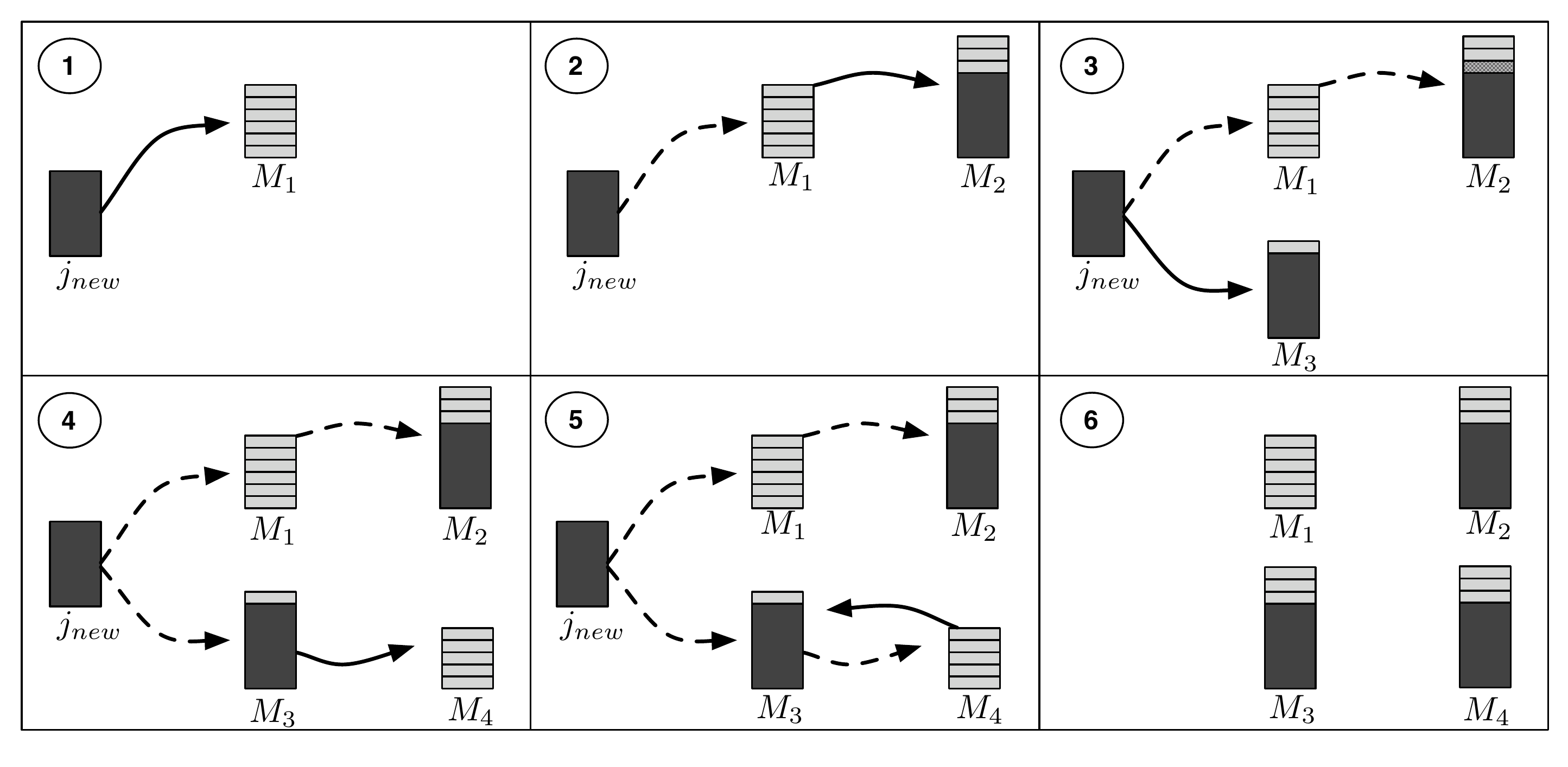}
   \caption{Possible steps when moving jobs to assign a new job $j_{new}$. Big and small jobs depicted in dark and light grey, respectively.}
\label{fig:idea}
 \end{figure*}

 \paragraph{Valid schedule and move.} As alluded to above, the
 algorithm always maintains a valid partial schedule by moving already
 assigned jobs. Let us formally define these concepts.

 \begin{definition}
   \label{def:fissched}
   A \emph{partial schedule} is an assignment $\sigma : \JO \mapsto \MA \cup
   \{TBD\}$ with the meaning that a job $j$ with $\sigma(j) = TBD$ is not assigned.
   A partial schedule is \emph{valid} if each machine $i\in \MA$ is
   assigned at
   most one big job and $p(\sigma^{-1}(i)) \leq 1+R$. 
 \end{definition}
 That $i$ is assigned at most one big job is implied here by
 $p(\sigma^{-1}(i)) \leq 1+R$ but will be used for the general case in
 Section~\ref{sec:mainalgo}.  Note also that with this notation a
 normal schedule is just a partial schedule $\sigma$ with
 $\sigma^{-1}(TBD) = \emptyset$.

 \begin{definition}
   A move is a tuple $(j,i)$ of a job $j\in \JO$ and a machine $i\in
   \Gamma_{\sigma}(j)$, where $\Gamma_{\sigma}(j) = \Gamma(j)
   \setminus\{\sigma(j)\}$ denotes the machines to which $j$ can be
   assigned apart from $\sigma(j)$.
 \end{definition}

 The main steps of the algorithm are the following.  At the start it
 will try to choose a valid assignment of $j_{new}$ to a machine,
 i.e., that can be made without violating the target makespan. If no
 such assignment exists then the algorithm adds the set of jobs that
 blocked the assignment of $j_{new}$ to the set of jobs we wish to
 move. It then repeatedly chooses a move of a job $j$ from the set of
 jobs that is blocking the assignment of $j_{new}$. If the move of $j$
 is valid then this will intuitively make more space for
 $j_{new}$. Otherwise the set of jobs blocking the move of $j$ is
 added to  the list of jobs we wish to move and the procedure will
 in the next iteration continue to move jobs recursively.

 To ensure that we will be able to eventually assign a new job
 $j_{new}$ it is important which moves we choose.  The algorithm will
 choose between certain moves that we call potential moves, defined so
 as to guarantee (i) that the procedure terminates and that (ii) if no
 potential move exists then we shall be able to prove that the dual of
 \CLP{} is unbounded, contradicting the feasibility of the primal.
 For this reason, we need to remember to which machines we have
 already tried to move jobs and which jobs we wish to move. We next
 describe how the algorithm keeps track of its history and how this
 affects which move we choose. We then describe the types of
 potential moves that the algorithm will choose between.

 \paragraph{Tree of blockers.} To remember its history,
Algorithm~\ref{algo:SimpleExtSched} has a dynamic tree $\CompTree$ of
 so-called \emph{blockers} that ``block'' moves we wish to
 do. Blockers of $\CompTree$  have both tree and linear structure. The
 linear structure is simply the order in time the blockers were added
 to $\CompTree$. To distinguish between the two we will use child and
 parent to refer to the tree structure; and after and before to refer
 to the linear structure. We also use the convention that the blockers
 $B_0, B_1, \dots, B_t$ of $\CompTree$ are indexed according to the
 linear order.
 \begin{definition}
   A blocker $B$ is a tuple that contains a subset $\JO(B)\subseteq
   \JO$ of jobs and a  machine $\MA(B)$ that takes value $\bot$ if no
   machine is assigned to the blocker.  
 \end{definition}
 To simplify notation, we refer to the machines and jobs in
 $\CompTree$ by $\MA(\CompTree)$ and $\JO(\CompTree)$,
 respectively. We will conceptually distinguish between \emph{small}
 and \emph{big} blockers and use $\MAS(\CompTree)$ and
 $\MAB(\CompTree)$ to refer to the subsets of $\MA(\CompTree)$
 containing the machines in small and big blockers, respectively. To
 be precise, this convention will add a bit to the description of a
 blocker so as to keep track of whether a blocker is small or
 big. 

 The algorithm starts by  initializing the tree $\CompTree$ with a
 special small blocker $B$ as root. Blocker $B$ is special in the
 sense that it is
 the only blocker with no machine assigned, i.e., $\MA(B) = \bot$.
 Its job set $\JO(B)$ includes the job $j_{new}$ we wish to assign.
 The next step of the procedure is to repeatedly try to move jobs,
 until we can eventually assign $j_{new}$. During its execution, the procedure
  also updates $\CompTree$ based on which move that is chosen so that 
\begin{enumerate}
\item $\MAS(\CompTree)$ contains those machines to which the algorithm will not try to move any jobs;

\item $\MAB(\CompTree)$ contains those
  machines to which the algorithm will not try to move any big jobs;

\item $\JO(\CompTree)$
  contains those jobs that the algorithm wishes to move.
\end{enumerate}
\paragraph{Potential moves.}
For a move $(j,i)$ to be useful it should be of some job $j\in \JO(
\CompTree)$ as this set contains those jobs we wish to move to make
space for the unassigned job $j_{new}$. In addition, the move $(j,i)$
should have a potential of succeeding and be to a machine $i$ where $j$
is allowed to be moved according to $\CompTree$. We refer to such
moves as \emph{potential} moves and a subset of them as
\emph{valid} moves. The difference is that for a potential move to succeed it might be necessary to recursively move other jobs whereas a valid move can be done immediately.
 With this intuition, let us now define these concepts
 formally.

\begin{definition}
  A move $(j,i)$ of a job $j\in \JO(\CompTree)$ is  a   potential
\begin{description}\itemsep0mm
\item[\textnormal{small move}:] 
 if $j$ is small and $i \not \in \MAS(\CompTree)$;
\item [\textnormal{big-to-small move:}] if $j$ is big, $i \not \in \MA(\CompTree), p(S_i) \leq R,$ and  no big job is assigned to $i$;
\item [\textnormal{big-to-big move:} ] if $j$ is big, $i \not \in \MA(\CompTree), p(S_i) \leq R,$ and  a big job is assigned to $i$;
\end{description}
where  $S_i=\{j\in \sigma^{-1}(i): j \mbox{ is small with $\Gamma_\sigma(j) \subseteq \MAS(\CompTree)$}\}$.
A potential move $(j,i)$ is \emph{valid} if the update $\sigma(j)
\leftarrow i$ results in a valid schedule.
\end{definition}
Note that $S_i$ refers to those small jobs assigned to $i$ with no
potential moves with respect to the current tree. The condition
$p(S_i) \leq R$ for big moves enforces that we do not try to move big
jobs to machines where the load cannot decrease to at most $R$ without
removing a blocker already present in $\CompTree$.
The algorithm's behavior   depends on the type of the chosen
potential move, say $(j,i)$ of a job $j \in \JO(B)$ for some blocker
$B$:
\begin{itemize}
\item If $(j,i)$ is a valid move then the schedule is updated by
$\sigma(j) \leftarrow i$. Moreover, $\CompTree$ is updated by removing
$B$ and all blockers added after $B$. This will allow us to prove that
the procedure terminates with the intuition being that $B$ blocked some move
$(j',i')$ that is more likely to succeed now after $j$ was reassigned.
\item If $(j,i)$ is a potential small or big-to-small move that is not
  valid then the algorithm adds a small blocker $B_S$ as a child to
  $B$ that consists of the machine $i$ and contains all jobs  assigned
  to $i$  that are not already in $\CompTree$. Note that after this,
  since $B_S$ is a small blocker no other jobs will be tried to be
  moved to $i$. The intuition of this being that assigning more jobs to $i$ would make it less
  likely to be able to assign $j$ to $i$ in the future. 

\item If $(j,i)$ is a potential big-to-big move
  then the algorithm adds a big blocker $B_B$ as child to $B$ that
  consists of the machine $i$ and the big job that is assigned to $i$.
  Since $B_B$ is a big blocker this prevents us from trying to
  assign more big jobs to $i$ but at the same time allow us to try to
  assign small jobs. The intuition being
  that this will not prevent us from assigning $j$ to $i$ if the big
  job currently assigned to $i$ is reassigned.
\end{itemize}
We remark that the rules on how to update $\CompTree$ are so that a job
can be in at most one blocker whereas a machine can be in at most two
blockers (this happens if it is first added in a big blocker and then
in a small blocker).

Returning to the example in Figure~\ref{fig:idea}, we can see that
after Step~$4$, $\CompTree$ consists of the special root blocker
with two children, which in turn have a child each. Machines $M_1,
M_2$ and $M_4$ belong to small blockers whereas $M_3$ 
belongs to a big blocker. Moreover, the moves chosen in the first,
second, and the third step are big-to-small, small, and big-to-big,
respectively, and from Step $5$ to $6$ a sequence of valid moves
is chosen.

\paragraph{Values of moves.} In a specific iteration there might be several potential moves
available. For the analysis it is important that they are chosen in a
specific order. Therefore, we assign a vector in $\mathbb{R}^2$
to each move and Algorithm~\ref{algo:SimpleExtSched} will then choose
the move with minimum lexicographic value.

\begin{definition}
\label{def:simpleval}
If we let $L_i=\sigma^{-1}(i)$ then a potential move $(j,i)$ has value
$$
\val(j,i) = \left\{
\begin{array}{ll}
(0, 0) & \mbox{if valid,} \\
(1,p(L_i)) & \mbox{if small move,} \\
(2,p(L_i)) & \mbox{if  big-to-small,} \\
(3, 0) & \mbox{if big-to-big,} 
\end{array}
\right.
$$
\end{definition}

Note that as the algorithm  chooses moves of minimum lexicographic
value, it  always chooses a valid move if available and a potential
small move before a potential move of a big job.

\paragraph{The algorithm.}
Algorithm~\ref{algo:SimpleExtSched} summarizes the algorithm discussed
above in a concise definition.  Given a valid partial schedule
$\sigma$ and an unscheduled job $j_{new}$, we prove
that the algorithm preserves a valid schedule by moving
jobs until it can assign $j_{new}$.  Repeating the procedure by
choosing a unassigned job in each iteration until all jobs are
assigned then yields Theorem~\ref{thm:simple}.

\algsetup{
linenosize=\small,
linenodelimiter=:
}
\begin{algorithm*}[hbtp]
\caption{SimpleExtendSchedule($\sigma, j_{new}$)}
\label{algo:SimpleExtSched}
\vspace{0.1cm}
\begin{algorithmic}[1]
\STATE Initialize  \CompTree{} with the root  $\JO(B) = \{j_{new}\}$ and $\MA(B) = \bot$\;
\WHILE{$\sigma(j_{new})$ is  TBD} 
\STATE Choose a  potential move $(j,i)$ with $j\in \JO(\CompTree)$ of minimum lexicographic value\;
\STATE    Let $B$ be the blocker in $\CompTree$ such that $j \in \JO(B)$\;
\IF{$(j,i)$ is valid} 
\STATE Update the schedule by $\sigma(j) \leftarrow i$\;
\STATE Update $\CompTree$ by removing $B$ and all blockers added after $B$\;
\ELSIF{$(j,i)$ is either a potential small move or a potential big-to-small move}
\STATE Add  a small blocker $B_S$ as child to $B$ with $\MA(B_S) = i$ and
    $\JO(B_S) = \sigma^{-1}(i) \setminus \JO(\CompTree)$\;
\ELSE[$(j,i)$ is a big-to-big move]
\STATE Let $j_B$ be the big job such that $\sigma(j_B) = i$\;
\STATE Add a big blocker $B_B$ as a child to $B$  with $\JO(B_B) = \{j_B\}$ and  $\MA(B_B) = i$\;
\ENDIF
\ENDWHILE
\RETURN $\sigma$\;
\end{algorithmic}
\end{algorithm*}

\hide{

\begin{algorithm}[H]
  \SetKwInOut{Input}{Input}\SetKwInOut{Output}{Output}
  \SetKwFunction{Try}{TrySmallMove}
  \SetKwFunction{Update}{UpdateSchedule}
  \SetKwData{B}{Blocking}
  \BlankLine 
  Initialize  \CompTree{} with the root  $\JO(B) = \{j_{new}\}$ and $\MA(B) = \bot$\;

  \While{$\sigma(j_{new})$ is  TBD} {
    Choose a  potential move $(j,i)$ with $j\in \JO(\CompTree)$ of minimum lexicographic value\;

    Let $B$ be the blocker in $\CompTree$ such that $j \in \JO(B)$\;
    \If{$(j,i)$ is valid} {
      Update the schedule by $\sigma(j) \leftarrow i$\;
      Update $\CompTree$ by removing $B$ and all blockers added after $B$\;
  }
  \ElseIf{$(j,i)$ is either a potential small move or a potential big-to-small move}
  {
    Add  a small blocker $B_S$ as child to $B$ with $\MA(B_S) = i$ and
    $\JO(B_S) = \sigma^{-1}(i) \setminus \JO(\CompTree)$\;
  }
  \Else({$(j,i)$ is a big-to-big move}){
    Let $j_B$ be the big job such that $\sigma(j_B) = i$\;
    Add a big blocker $B_B$ as a child to $B$  with 
    $\JO(B_B) = \{j_B\}$ and  $\MA(B_B) = i$\;
  }
}
\Return $\sigma$\;
\caption{SimpleExtendSchedule($\sigma, j_{new}$):}
\label{algo:SimpleExtSched}
\end{algorithm}

}
\subsection{Analysis}
Since the algorithm only updates $\sigma$  if  a valid move was chosen we have that
the schedule stays valid throughout the execution.
It remains to verify that the algorithm terminates and that  there
always is a potential move to choose. 

Before proceeding with the proofs, we need to introduce some
notation. When arguing about  $\CompTree$ we will let
\begin{itemize}
\item $\CompTree_t$ be the subtree of $\CompTree$ induced by the   blockers $B_0, B_1, \dots, B_t$;
\item  $S(\CompTree_t) = \{j\in \JO: \mbox{$j$ is small with $\Gamma_\sigma(j)
    \subseteq \MAS(\CompTree_t)$}\}$ and often refer to $S(\CompTree)$ by simply $S$; and
\item $S_i(\CompTree_t) = \sigma^{-1}(i) \cap S(\CompTree_t)$  (often
  refer to $S_i(\CompTree)$ by  $S_i$).
\end{itemize}
The set $S(\CompTree_t)$ contains the set of small jobs with no
potential moves with respect to $\CompTree_t$. Therefore no job in
$S(\CompTree_t)$ has been reassigned since $\CompTree_t$ became a
subtree of $\CompTree$, i.e., since $B_t$ was added.  We can thus omit
the dependence on $\sigma$ when referring to $S(\CompTree_t)$ and
$S_i(\CompTree_t)$ without
ambiguity. A related
observation that will be useful throughout the analysis is the
following. No job in a blocker $B$ of $\CompTree$ has been reassigned
after $B$ was added since that would have caused the
algorithm to remove $B$ (and all blockers added after $B$).

We now continue by first proving that there always is a potential move
to choose if \CLP{} is feasible followed by the proof that the
procedure terminates in Section~\ref{sec:simpletermination}.

\subsubsection{Existence of potential moves}
\label{sec:simpleexistence}
We prove that the algorithm never gets stuck if the \CLP{} is feasible.

\begin{lemma}
\label{lemma:simplefis}
  If \CLP{} is feasible then Algorithm~\ref{algo:SimpleExtSched} can
  always choose a  potential move.
\end{lemma}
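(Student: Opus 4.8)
The plan is to argue by contradiction: assume that at some point during the execution of Algorithm~\ref{algo:SimpleExtSched} there is no potential move available, and construct from the current tree $\CompTree$ an unbounded ray for the dual of \CLP, contradicting feasibility of the primal. The natural dual solution to build is supported on the machines of $\CompTree$ and on the jobs we wish to move, i.e. on $\MA(\CompTree)$ and $\JO(\CompTree)$. First I would set up the structure: since no potential move exists, every job $j \in \JO(\CompTree)$ (in particular $j_{new}$ and every job sitting in a blocker) has the property that all of its alternative machines $\Gamma_\sigma(j)$ are ``blocked'' — for a small job this means $\Gamma_\sigma(j) \subseteq \MAS(\CompTree)$ (so small jobs of $\CompTree$ actually lie in $S$), and for a big job every machine in $\Gamma_\sigma(j)$ is either already in $\MA(\CompTree)$ or has $p(S_i) > R$. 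This is the combinatorial skeleton I would extract before touching the LP.

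Next I would define the candidate dual point. The idea is to put weight $1$ on the machines in $\MAS(\CompTree)$ and weight $1/(1-R)$ (or an appropriate constant, using $1+R$ being the target and $R = 2/3+\epsilon$) on the machines in $\MAB(\CompTree)$ — distinguishing the two because a small blocker ``absorbs'' a full machine's worth of small jobs with no escape, whereas a big blocker only blocks big jobs. On the job side I would give every job in $\JO(\CompTree)$ a matching positive $z_j$ (scaled by size: $z_j \approx p_j$ times the same constants) so that the constraint $y_i \ge \sum_{j\in C} z_j$ holds for every configuration $C \in \conff{i}$: a configuration on a machine $i$ in a small blocker can contain at most jobs of total size $1$, and on a machine in a big blocker the relevant slack is $R$ because a configuration through such a machine either uses the slot taken by the blocker's big job or not. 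Then scaling the whole solution $(y,z) \mapsto (ty, tz)$ for $t\to\infty$ gives an unbounded ray, provided the objective $\sum_i y_i - \sum_j z_j$ is strictly negative, i.e. the total $z$-mass assigned to jobs strictly exceeds the total $y$-mass on machines.

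The crux — and the main obstacle — is precisely this counting inequality: I must show $\sum_{j} z_j > \sum_{i} y_i$, i.e. that the jobs we are forced to carry around genuinely overload the blocked machines, using that \CLP{} with $OPT_{LP}=1$ would otherwise be satisfiable. Here is where the specific thresholds matter: every small blocker (machine $i \in \MAS(\CompTree)$) was created because $\sigma^{-1}(i)$ was too full to accept the move that spawned it, so $p(\sigma^{-1}(i) \cap \JO(\CompTree)) > R$ roughly speaking — combined with the root carrying $j_{new}$ of size up to $1$, and with big blockers each contributing a big job of size $1$ against only $R$-worth of ``room'', the weighted job total beats the weighted machine total by exactly the margin that forces $R = 2/3+\epsilon$ (giving makespan $1+R = 5/3+\epsilon$). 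I would need to be careful that jobs are not double-counted — using the earlier remark that each job lies in at most one blocker and each machine in at most two — and that the ``blocked'' condition on big jobs ($p(S_i) > R$ on every alternative machine, where $S_i \subseteq S$) is exactly what prevents those machines from being useful in any fractional reassignment, which is what lets me ignore machines outside $\MA(\CompTree)$ when bounding the LP value. Verifying these inequalities carefully, and choosing the constants so the dual constraints hold simultaneously for small-blocker and big-blocker machines, is the real work; the unboundedness conclusion is then immediate.
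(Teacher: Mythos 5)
Your overall framework is the paper's: assume no potential move exists, build a dual solution supported on $\JO(\CompTree)\cup S$ and on the blocker machines, verify feasibility of the dual constraints, and show the objective is strictly negative so scaling gives an unbounded ray. The structural observations you extract first (small jobs of $\CompTree$ lie in $S$; big jobs only have moves to machines already in $\MA(\CompTree)$ or with $p(S_i)>R$) are exactly what the paper uses. So the skeleton is right.

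However, the two places you flag as ``the real work'' are precisely where your plan as written would not close, and one of them requires an idea you do not mention. First, the proposed dual values are off in a way that matters. Setting $z_j\approx p_j$ uniformly (so $z_j=1$ for a big job) and $y_i$ equal to a fixed constant like $1/(1-R)=3$ on $\MAB(\CompTree)$ does not satisfy the structure the proof needs: what the paper actually does is \emph{round down} the big jobs, setting $z_j^*=2/3$ (not $p_j=1$) when $j\in\JO(\CompTree)$ is big, and set $y_i^*=\sum_{j\in\sigma^{-1}(i)}z_j^*$ for every $i\notin\MAS(\CompTree)$ rather than a fixed constant. The rounding is essential for feasibility: a configuration on a machine $i\notin\MAS(\CompTree)$ can contain a single big job, so you need $y_i^*\geq z^*_{\text{big}}$, and this is exactly $2/3$, which is achieved either because $i$ already holds a big job (if $i\in\MAB$) or because $p(S_i)>R\geq 2/3$ (if $i\notin\MA(\CompTree)$ and some big job of $\CompTree$ has a blocked move to $i$). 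With $z_j=p_j=1$ on big jobs you would need $y_i\geq 1$ on all such machines, which your counting cannot afford.

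Second, and more substantively, your description of the counting step (``every small blocker was created because $\sigma^{-1}(i)$ was too full, so the job mass beats the machine mass'') misses the case split that makes the inequality true. A small blocker created by a blocked \emph{small} move contributes $\sum_{j\in\sigma^{-1}(i_t)}z_j^*\geq 4/3$, but a small blocker created by a blocked \emph{big-to-small} move contributes only $>R=2/3+\epsilon$, which is strictly less than the $1$ you need per machine in $\MAS(\CompTree)$. The proof only goes through because of a pairing argument: whenever a big-to-small move is blocked, the very next blocker $B_{t+1}$ added must come from a potential small move (since $p(S_{i_t})\leq R < p(\sigma^{-1}(i_t))$ and there is no big job on $i_t$, some small job on $i_t$ still has a potential small move, and small moves have smaller lexicographic value), so the deficit of $B_t$ can be amortized against the surplus of $B_{t+1}$, yielding $\sum_{j\in\sigma^{-1}(i_t)}z_j^*+\sum_{j\in\sigma^{-1}(i_{t+1})}z_j^*\geq 2$. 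Without identifying this alternation you cannot prove $|\MAS(\CompTree)|\leq\sum_{i\in\MAS(\CompTree)}\sum_{j\in\sigma^{-1}(i)}z_j^*$, which is the inequality your whole negativity argument rests on.
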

\begin{proof}
  Suppose that the algorithm has reached an iteration where no
  potential move is available.  
  We will show that this implies that the dual of \CLP{}
  is unbounded and we can thus deduce as required that the primal is
  infeasible in the case of no potential moves.

  As each solution $(y,z)$ of the dual can be scaled by a scalar
  $\alpha$ to obtain a new solution $(\alpha y, \alpha z)$, any
  solution such that $\sum_{i\in \MA} y_i < \sum_{j\in \JO} z_j$
  implies unboundedness. We proceed by defining such a solution $(y^*,
  z^*)$:
\[
z_j^* = \begin{cases}
    2/3  & \mbox{if $j\in \JO(\CompTree)$ is big,}\\
    p_j = \epsilon & \mbox{if $j \in \JO(\CompTree)\cup S$ is small,} \\
    0 & \mbox{otherwise,}
\end{cases}
\]

 and 
\[
y^*_i = \begin{cases}
  1  & \mbox{if } i \in \MAS(\CompTree),\\
  \sum_{j\in \sigma^{-1}(i)} z_j^* & \mbox{otherwise.}

\end{cases}
\]
Let us first verify that $(y^*, z^*)$ is indeed a feasible solution.
\begin{claim}
\label{claim:simplefis}
  Assuming no potential moves are available,  $(y^*, z^*)$ is a
  feasible solution.
\end{claim}
\begin{proofclaim}
  We need to verify that $y_i^* \geq \sum_{j\in C} z_j^*$ for each
  $i\in \MA$ and each $C \in \conff{i}$.  Recall that the total
  processing time of the jobs in a configuration is at most $1$. Also
  observe that $z_j^* = 0$ for jobs not in $\JO(\CompTree)\cup S$ and
  we can thus omit such jobs when verifying the constraints. 
  
  Since $z_j^* \leq p_j$ for all $j\in \JO$, we have that no
  constraint involving the variable $y_i^*$ for $i\in \MAS(\CompTree)$
  is violated. Indeed for such a machine $i$ we have $ y_i^* = 1$ and
  $\sum_{j \in \conff{i}} z_j^* \leq \sum_{j \in \conff{i}} p_j \leq
  1$ for any $C \in \conff{i}$.

  As a small job $j\in \JO(\CompTree)$ with a move $(j,i)$ is a
  potential move if $i \not \in \MAS(\CompTree)$ and no such moves exist by assumption, no small jobs in
  $\JO(\CompTree)$ can be moved to machines  in
  $\MA \setminus \MAS(\CompTree)$. Also, by definition, no small jobs in $S$ can 
  be moved to a machine $i\not\in \MAS$. This together with the fact
  that a big job $j_B$ has processing time $1$ and is thus alone in a
  configuration gives us that a constraint involving $y_i^*$ for
  $i\not \in \MAS(\CompTree)$ can only be violated if $y_i^* <
  z_{j_B}^* = 2/3$.

  As a machine $i \in \MAB(\CompTree)$
  has a big job assigned, we have that for those $y_i^* \geq 2/3$.  Now
  consider the final case when $i \not \in \MA(\CompTree)$. If a big job in $\JO(\CompTree)$
  has a move to $i$ then since it is not a potential move $p(S_i)
  > R \geq 2/3$. As $z^*_j = p_j = \epsilon$ for small jobs, we have then $y_i^* = \sum_{j\in \sigma^{-1}(i)} z_j^* \geq p(S_i) \geq 2/3$, as required.
  
  We can thus conclude that no constraint is   violated and $(y^*, z^*)$ is a feasible solution.
\end{proofclaim}

Having proved that $(y^*,z^*)$ is a feasible solution, the proof of Lemma~\ref{lemma:simplefis} is
now completed by showing that the value of the solution is negative. 
\begin{claim}
\label{claim:simplenegative}
We have that $\sum_{i\in \MA} y^*_i < \sum_{j\in \JO} z^*_j$.
\end{claim}
\begin{proofclaim}
  By the definition of $y^*$,
  \begin{equation}
    \label{eq:simpleval}
    \sum_{i\in \MA} y^*_i =
    \sum_{i\in \MAS(\CompTree)} 1 + \sum_{i \not \in \MAS(\CompTree)}
    \sum_{j\in\sigma^{-1}(i)} z_j^*
  \end{equation}
  
  We proceed by bounding $\sum_{i\in \MAS(\CompTree)} 1$ from above by
  $$\sum_{i \in \MAS(\CompTree)} \sum_{j\in \sigma^{-1}(i)} z_j^*.$$
  Let $B_0, B_1, \dots, B_\ell$ be the blockers of $\CompTree$ and
  consider a small blocker $B_t$ for some $t=1, \dots, \ell$. By the
  definition of Algorithm~\ref{algo:SimpleExtSched}, $B_t$ was added in an
  iteration when either a potential small or big-to-small move
  $(j_0,i_t)$ was chosen with $\MA(B_t) = i_t$. Suppose first that
  $(j_0,i_t)$ was a potential small move. Then as it was not valid,
  $ p(j_0) + p(\sigma^{-1}(i_t)) > 1+R.$ This inequality together
  with the fact that $i_t$ is assigned at most one big job $j_B$  gives us
  that if $(j_0, i_t)$ is a small move then
\begin{equation}
\label{eq:simplecount}
\sum_{j\in \sigma^{-1}(i_t)} z_j^*  = p(\sigma^{-1}(i_t)) - \left(p(j_B) - z_{j_B}^*\right) \geq  
1+R-p(j_0) - \frac{1}{3} = 
\frac{4}{3}.
\end{equation}

On the other hand, if $(j_0, i_t)$ is a potential
big-to-small move then as it was not valid
\begin{equation}
\label{eq:simplecount2}
\frac{2}{3}  < R <  p(\sigma^{-1}(i_t)) = \sum_{j\in \sigma^{-1}(i_t)} z_j^*,
\end{equation}
where the equality follows from that $i_t$ is only assigned small jobs
(since we assumed $(j_0, i_t)$ was a big-to-small
move). 

From~\eqref{eq:simplecount} and~\eqref{eq:simplecount2} we can see
that $\sum_{i\in \MAS(\CompTree)} 1$ is bounded from above by $\sum_{i
  \in \MAS(\CompTree)} \sum_{j\in \sigma^{-1}(i)} z_j^*$ if the number
of small blockers added because of small moves is greater than the
number of small blockers added because of big-to-small moves. 

We proceed by proving this by showing that if $B_t$ is a small blocker
added because of a potential big-to-small move then $B_{t+1}$ must be
a small blocker added because of a small move.  Indeed, the definition
of a potential big-to-small move $(j_0, i_t)$ and the fact that it was not valid
imply that
\begin{equation*}
p\left(S_{i_t}(\CompTree_t)\right) \leq R \qquad \mbox{and} \qquad p\left(\sigma^{-1}(i_t)\right) > R.
\end{equation*} 

As there are no big jobs assigned to $i_t$ (using that $(j_0, i_t)$
was a big-to-small move), the above inequalities give us that there
is always a potential small move of a small job assigned to $i_t$ with
respect to $\CompTree_{t}$.
In other words, we have that $B_t$ was not the last blocker added to
$\CompTree$ and as small potential moves have the smallest
lexicographic value (apart from valid moves), $B_{t+1}$ must be a
small blocker added because of a small move.
We can thus ``amortize'' the load of $B_{t+1}$ to increase the load of
$B_t$. Indeed, if we let $\MA(B_{t+1}) = i_{t+1}$
then~\eqref{eq:simplecount} 
and~\eqref{eq:simplecount2} yield $ \sum_{j\in \sigma^{-1}(i_t)} z_j^*
+\sum_{j\in \sigma^{-1}(i_{t+1})} z_j^* \geq 2.  $

Pairing each small blocker $B_t$ added because of a
big-to-small moves with the small blocker $B_{t+1}$ added because of a
small move as above allows us to deduce that $|\MAS(\CompTree)| \leq \sum_{i \in
  \MAS(\CompTree)} \sum_{j\in \sigma^{-1}(i)} z_j^*$. Combining this
inequality with~\eqref{eq:simpleval} yields
$$
\sum_{i\in \MA} y^*_i \leq \sum_{i\in \MA} \sum_{j\in \sigma^{-1}(i)} z_j^* = \sum_{j\in \JO} z_j^* - z_{j_{new}}^* < \sum_{j\in \JO} z_j^*,
$$
as required.
\end{proofclaim}

We have proved that there is a solution $(y^*, z^*)$ to the dual that
is feasible (Claim~\ref{claim:simplefis}) and  has negative value
(Claim~\ref{claim:simplenegative}) assuming there are no potential moves. In
other words, the \CLP{} cannot be feasible if no potential moves can be
chosen which completes the proof of the lemma.

\end{proof}

\subsubsection{Termination}
\label{sec:simpletermination}

We continue by proving that Algorithm~\ref{algo:SimpleExtSched} terminates. As
the algorithm only terminates when a new job is assigned,
Theorem~\ref{thm:simple} follows from Lemma~\ref{lemma:simpleterm} together with
Lemma~\ref{lemma:simplefis} since then we can, as already explained,
repeat the procedure until all jobs are assigned.

The intuition that the procedure terminates, assuming there always is
a potential move, is the following. As every time the algorithm
chooses a potential move that is not valid a new blocker is added
to the tree and as each machine can be in at most 2$|\MA|$ blockers,
we have that the algorithm must choose a valid move after at most
$2|\MA|$ steps. Such a move will perhaps trigger more valid moves
and each valid move makes a potential move previously blocked more
``likely''. We can now guarantee progress by measuring the
``likeliness'' in terms of the lexicographic value of the move.

\begin{lemma}
\label{lemma:simpleterm}
  Assuming there is always a potential move to choose, Algorithm~\ref{algo:SimpleExtSched} terminates.
\end{lemma}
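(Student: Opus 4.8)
The plan is to exhibit a potential function on the state of the algorithm---the pair $(\sigma,\CompTree)$---that strictly decreases with every iteration of the while-loop and takes values in a well-ordered set, so that the loop can only execute finitely many times. The natural quantity to track is the multiset of values $\val(j,i)$ of the moves the algorithm makes, together with the structure of $\CompTree$. Concretely, I would associate to the current tree $\CompTree$ with blockers $B_0,B_1,\dots,B_t$ a vector $\Phi(\CompTree)$ whose entries record, for each blocker in \emph{linear} order, the value of the potential move that created it (valid moves create no blocker but collapse the tree). Since each blocker $B_s$ with $s\ge 1$ was added because a non-valid potential move was chosen, and at that moment it had \emph{minimum} lexicographic value among all potential moves, the entry for $B_s$ is at least $(1,0)$; the root contributes a fixed top entry. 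The claim is then that one iteration of the loop lexicographically decreases $\Phi$ (read as a string over the ordered alphabet of move-values, shorter-is-smaller at a truncation point).

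The key case analysis is by the type of the chosen move $(j,i)$ of a job $j\in\JO(B)$:

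\begin{itemize}
\item If $(j,i)$ is \emph{valid}, the schedule is updated and $B$ together with all blockers added after $B$ is removed. So $\CompTree$ loses at least one blocker, and $\Phi$ becomes a proper prefix of (a modification of) its former self; I must check that the prefix up to $B$'s parent is unchanged and that truncation counts as a decrease. The subtle point---and this is where the argument has teeth---is to argue that the move $(j',i')$ that originally caused $B$ to be created is now ``more likely'': its value has strictly decreased, or it has even become valid, because the job $j$ that was blocking it (sitting on $\MA(B)=i$, contributing to its load $p(L_i)$) has just left machine $i$. So when the algorithm next revisits blocker $B$'s parent, the move it picks there has strictly smaller value than the old entry for $B$, which is exactly what makes $\Phi$ drop even though the tree may regrow.
\item If $(j,i)$ is a non-valid potential small, big-to-small, or big-to-big move, a new blocker is \emph{appended} at the end of the linear order, so $\Phi$ gets one more entry and all previous entries are untouched; hence $\Phi$ strictly increases in length with the same prefix. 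This looks like it breaks the argument, so the honest statement is that $\Phi$ cannot increase in length forever: each machine can appear in at most two blockers (one big, then one small), so $\CompTree$ has at most $2|\MA|$ blockers at any time. The termination argument is therefore a two-level one: between consecutive valid moves at most $2|\MA|$ blockers are appended, and each valid move strictly decreases the ``value profile'' of $\CompTree$ in the lexicographic order on bounded-length strings over the finite-height value set $\{(0,0)\}\cup\{1,2\}\times[0,1+R]\cup\{(3,0)\}$.
\end{itemize}

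The main obstacle is the bookkeeping in the valid-move case: I need a clean invariant stating that when blocker $B$ was created by move $(j',i')$, the jobs of $\JO(B)$ are exactly those that were ``new'' on $\MA(B)$ at that time and have not moved since (this is the observation already flagged in the text: no job in a blocker has been reassigned after the blocker was added, else the blocker would have been removed), and that reassigning $j$ off $i=\MA(B)$ strictly lowers $p(L_i)$, hence strictly lowers $\val(j',i')$ in its second coordinate, or drops its first coordinate (e.g. a small move of value $(1,p(L_i))$ can only be replaced by a move of value $\le (1,p(L_i)-p_j)<(1,p(L_i))$, and a big-to-* move's value either decreases in the second coordinate or the move turns valid). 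One must also handle the fact that after the collapse the algorithm is free to choose a \emph{different} job from $B$'s parent's job set, possibly a valid one of value $(0,0)$, which only helps. Once these monotonicity facts are pinned down, assembling them into ``$\Phi$ strictly decreases in a well-order'' is routine, and since the loop exits precisely when $j_{new}$ is assigned, the lemma---and with it Theorem~\ref{thm:simple}, via Lemma~\ref{lemma:simplefis} and repetition over all jobs---follows.
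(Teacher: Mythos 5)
Your core idea---the vector of blocker values as a lexicographic potential, together with the decisive observation that once a valid move truncates the tree at position $k+1$ and the algorithm next chooses a non-valid potential move, that move has value strictly smaller than the old $\val(B_{k+1})$ (moving $j_{k+1}$ off $\MA(B_{k+1})$ lowers $p(\sigma^{-1}(i_{k+1}))$ for small/big-to-small, and for big-to-big, with only two sizes, makes the move outright valid)---is exactly the paper's argument, and your case split matches.

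Where your write-up wobbles is the ``two-level'' packaging, which as stated does not quite close. In the shorter-is-smaller order you propose, a single valid move (truncation) decreases $\Phi$, but the tree then regrows and can become \emph{longer} than it was right after the previous valid move; so the sequence of profiles sampled immediately after each valid move is not monotone, and ``each valid move strictly decreases the value profile'' is not a statement you can iterate. The decrease is real only once you compare the profiles at the moments a non-valid potential move is chosen (or termination), which is precisely where the position-$(k+1)$ observation does its work. The paper makes this clean by defining the potential as $(\val(B_0),\dots,\val(B_\ell),\infty)$: with the trailing $\infty$, appending a blocker is itself a strict lexicographic decrease (a finite value replaces $\infty$), a chain of valid moves followed by the next non-valid choice is a strict decrease by the position-$(k+1)$ comparison, and termination follows because the vector has bounded length (at most $2|\MA|$ blockers, plus the root and the sentinel) and finitely many attainable entries. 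So: same approach and same key lemma, but you want to state the invariant on the $\infty$-terminated vector (equivalently, sample across phases ending in a non-valid potential move), rather than on valid moves in isolation.
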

\begin{proof}
  To prove that the procedure terminates we associate a vector, for
  each iteration, with the dynamic tree $\CompTree$. We will then show
  that the lexicographic order of these vectors decreases.
  
  The vector associated to $\CompTree$ is defined as follows. Let $B_0,
  B_1, \dots, B_\ell$ be the blockers of $\CompTree$. With
  blocker $B_i$ we will associate the value vector, denoted by
  $\val(B_i)$, of the move that was chosen in the iteration when $B_i$
  was added. The vector associated with $\CompTree$ is then simply
$$
(\val(B_0), \val(B_1), \dots, \val(B_\ell), \infty).
$$
If the algorithm adds a new blocker then  the
lexicographic order clearly decreases as the vector ends with
$\infty$.  It remains to verify what happens when blockers are
removed from $\CompTree$.  In that case let the algorithm run until it
chooses a potential move that is not valid or terminates. As
blockers will be removed in each iteration until it either
terminates or chooses a potential move that is not valid we will
eventually reach one of these cases. If the algorithm terminates we
are obviously done. 

Instead, suppose that starting with $\sigma$ and $\CompTree$ the
algorithm does a sequence of steps where blockers are removed until we
are left with an updated schedule $\sigma_k$, a tree of blockers
$\CompTree_k$ with $k+1<\ell$ blockers, and a potential move $(j',
i')$ that is not valid is chosen.  As a blocker $B$ is removed if a
blocker added earlier is removed, we have that $\CompTree_k$ equals
the subtree of $\CompTree$ induced by $B_0, B_1, \dots, B_k$.

We will thus concentrate on comparing
the lexicographic value of $(j', i')$ with that of $B_{k+1}$. Recall
that $\val(B_{k+1})$ equals the value of the move that was chosen  when
$B_{k+1}$ was added, say $(j_{t}, i_{k+1})$ for $j_{t} \in \JO(B_{t})$
with $1\leq t \leq k$ and $ \MA(B_{k+1}) = i_{k+1}$.

A key observation is that since blocker $B_{k+1}$ was removed but not
$B_k$, the most recent move was of a job $j_{k+1}\in \JO(B_{k+1})$ and we have $\sigma(j_{k+1}) = i_{k+1}$ and $\sigma_k(j_{k+1}) \neq
i_{k+1}$.  Moreover, as $(j_t, i_{k+1})$ was a potential move when
$B_{k+1}$ was added, it is a potential move with respect to
$\CompTree_k$ (using that $S_{i_{k+1}}\left(\CompTree_k\right)$ has
not changed).  Using these observations we now show that the
lexicographic value of $(j_t, i_{k+1})$ has decreased. As the
algorithm always chooses the move of minimum lexicographic value, this
will imply $\val(j',i') < \val(B_{k+1})$ as required.

If $(j_t, i_{k+1})$ was a small or big-to-small move then $B_{k+1}$
was a small blocker. As no jobs were moved to $i_{k+1}$ after
$B_{k+1}$ was added, $p(\sigma_k^{-1}(i_{k+1})) <
p(\sigma^{-1}(i_{k+1}))$ and we have that the lexicographic value
of $(j_t, i_{k+1})$ has  decreased.

Otherwise if $(j_t, i_{k+1})$ was a big-to-big move then $j_{k+1}$
must be a big job.
As we only have jobs of two sizes, the move of the big job $j_{k+1}$ implies that $(j_t, i_{k+1})$ now is a valid move which contradicts the assumption that the algorithm has chosen a potential but not valid move $(j', i')$.
\hide{
together with the value of moves
\val(B_{k+1})$ by distinguishing three cases.

\begin{itemize}
\item{\emph{$(j_{t}, i_{k+1})$ was a small move:}} Then $B_{k+1}$ is a small
  blocker and no jobs have been moved to $i_{k+1}$ in the iterations
  after $B_{k+1}$ was added. Clearly, $(j_{t}, i_{k+1})$ is also a
  potential small move with respect to $\sigma_k$ and $\CompTree_k$. Furthermore by the key observation,
$$
\val(j', i') \leq \val(j_{t}, i_{k+1})  = (1, p(\sigma_k^{-1}(i_{k+1}))) < (1, p(\sigma^{-1}(i_{k+1}))) = \val(B_{k+1}),
$$
as required.
\item{\emph{$(j_{t}, i_{k+1})$ was a big-to-small move:}} This case follows in
  the same manner as the previous case but we need to verify that
  $(j_{t}, i_{k+1})$ is still a potential move with respect to
  $\CompTree_k$ and $\sigma_k$. 
  As $(j_{t}, i_{k+1})$ was a potential move with respect to
  $\CompTree_k$ when $B_{k+1}$ was added, we have by
  Observation~\ref{obs:simple} that no jobs in $\JO(\CompTree_k)$ have
  been reassigned and thus that the inequality
$$
p(S_{i_{k+1}}) \leq R$$ is still satisfied with respect to $\sigma_k$
and $\CompTree_k$. Indeed, $S_{i_{k+1}} = \{j\in \JO(\CompTree_k) \cap
\sigma^{-1}(i_{k+1}): j \mbox{ is small}\}$ only depends on the jobs
in $\JO(\CompTree_k)$ that have not been reassigned.

%
%
%
We have thus that $(j_t, i_{k+1})$ is also a potential small move with
respect to $\sigma_k$ and $\CompTree_k$. Similar to the previous case
we have thus $\val(j',i') \leq \val(j_t, i_{k+1}) = (2,
p(\sigma_k^{-1}(i_{k+1}))) < (2, p(\sigma^{-1}(i_{k+1}))) =
\val(B_{k+1}),
$
as required.

\item{\emph{$(j_{t}, i_{k+1})$ was a big-to-big move:}} By the arguments
of the previous case we have that $(j_{t}, i_{k+1})$ is still a
potential move with respect to $\sigma_k$ and $\CompTree_k$. Since
$j_{k+1}\in \JO(B_{k+1})$ must be a big job we have that $i_{k+1}$ is not
assigned any big job with respect to $\sigma_k$. Hence, $(j_{t},
i_{k+1})$ will in the next iteration either be a valid or potential
big-to-small move. In either case $\val(j', i') \leq \val(j_{t},
i_{k+1}) < \val(B_{k+1})$ as required.

\end{itemize}

}

We have thus proved that the vector associated to $\CompTree$ always
decreases. As there are at most $2|\MA|$ blockers in $\CompTree$
(one small and one big blocker for each machine) and a vector
associated to a blocker can take a finite set of values, we 
conclude that the algorithm terminates.
\end{proof}


\section{Proof of Main Result}
\label{sec:mainalgo}
In this section we extend the techniques presented in
Section~\ref{sec:simple} to prove our main result, i.e., that there is
a polynomial time algorithm that estimates the optimal
makespan of the restricted assignment problem within a factor of
$\frac{33}{17} + \epsilon \approx 1.9412 +\epsilon$, where $\epsilon > 0$ is
an arbitrarily small constant.  More specifically, we shall show the
following theorem which clearly implies
Theorem~\ref{thm:mainintro}. The small loss of $\epsilon$ is as
mentioned because the known polynomial time algorithms only solve \CLP{} up
to any desired accuracy.
\begin{theorem}
\label{thm:main}
The \CLP{} has integrality gap at most $\frac{33}{17}$.

\end{theorem}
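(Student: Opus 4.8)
The plan is to mimic the structure of Section~\ref{sec:simple} but with the much richer machinery that jobs of arbitrary size $p_j \in (0,1]$ force upon us. As before, set a threshold parameter — now something like $R = 16/17$ — and design a local search algorithm that, given a valid partial schedule and an unscheduled job $j_{new}$, repeatedly performs "potential moves" until it can assign $j_{new}$ without exceeding makespan $1+R = 33/17$. The algorithm again maintains a dynamic tree $\CompTree$ of blockers, keeps a running history of machines to which we will not try to move jobs (distinguishing small blockers that forbid all moves from big blockers that only forbid big moves), and picks at each step the potential move of minimum lexicographic value. The two things to establish are exactly the two lemmas of the simple case: (i) if \CLP{} is feasible then a potential move always exists, and (ii) the algorithm terminates. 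Granting both, repeating the extension procedure job by job produces a schedule of makespan $\le 33/17$, which proves the integrality gap bound.

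First I would set up the generalized definitions: a job is \emph{small} if $p_j \le$ some cutoff (likely tied to $1-R$ or a small constant) and \emph{big} otherwise, and I would refine "big" blockers so that a blocker records not just one job but possibly a set of big jobs whose removal we are waiting on, together with a "slack" bookkeeping of how much load on the target machine is due to jobs already in $\CompTree$. The potential-move definitions generalize: a small move of $j \in \JO(\CompTree)$ to $i \notin \MAS(\CompTree)$; a big move of $j$ to a machine $i$ where the load of jobs \emph{not} reachable for removal (the analogue of $S_i$, those jobs $j'$ assigned to $i$ with $\Gamma_\sigma(j') \subseteq \MAS(\CompTree)$ plus the big jobs already blocking) is at most $R$, split according to whether $i$ already carries a blocking big job. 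The value function again orders valid moves first, then small moves by residual load, then the various big-move types, so that termination can be argued via a lexicographically decreasing potential vector $(\val(B_0),\dots,\val(B_\ell),\infty)$ over the tree.

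The heart of the argument — and the step I expect to be by far the hardest — is the generalization of Lemma~\ref{lemma:simplefis}: building a feasible dual solution of negative value when no potential move exists. In the two-size case the dual weights were the crude $z_j^* \in \{2/3, \epsilon, 0\}$ and $y_i^* \in \{1, \sum z_j^*\}$, and the counting argument paired each big-to-small blocker with a following small-move blocker to get an average load $\ge 2$ per pair against a cost of $1$ each. With arbitrary sizes one cannot use a single constant for all big jobs; instead $z_j^*$ must scale with $p_j$ in a piecewise way (something like $z_j^* = \min(p_j, \text{threshold})$ or a capped affine function), and the feasibility check $y_i^* \ge \sum_{j \in C} z_j^*$ for every configuration $C$ becomes a genuine packing inequality rather than the trivial "one big job per configuration" observation. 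This is precisely where the constant $33/17$ comes from: one has to choose the dual weights and the threshold $R$ so that (a) every configuration constraint holds given that machines with no available potential move have large "stuck" load $> R$, and (b) the amortized counting over the blockers of $\CompTree$ still beats the number of small blockers. I would expect the proof to require a careful case analysis on blocker types — small blockers from small moves, small blockers from big-to-small moves, big blockers — with an amortization scheme more elaborate than simple pairing, possibly charging fractional amounts along chains in the tree, and the exact fractions are what pin down $16/17$ versus $2/3$. The termination lemma should then go through essentially as before, with the extra care that moving a big job off a machine need not instantly make a blocked big-to-big move valid (since other small jobs may remain), so one argues the blocked move's value strictly decreased rather than became $(0,0)$.
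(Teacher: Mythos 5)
Your road map is the right one — same $R = 16/17$, same local-search-plus-dual-witness architecture, same pair of lemmas (a potential move always exists under \CLP{} feasibility; the algorithm terminates via a lexicographic potential on $\CompTree$), and you correctly locate the crux in proving dual feasibility and negativity when no potential move remains. But the proposal stops short of the one structural idea that makes the general case work, and without it the argument breaks exactly where you flag the difficulty. In the two-size case, whenever a big job is blocked on a machine with no big job (a big-to-small blocker), you are guaranteed that some small job on that machine has a potential small move, and that guarantee is what lets you pair the blocker with its successor and amortize $\ge 2$ per pair. In the general case a machine can be blocked by jobs of size just above $1/2$ — too big to ``make room'' for the incoming job without being moved, yet not big enough to be the unique big job on the machine that big blockers assume. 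Your fix (``a big move of $j$ to a machine $i$ where the load of jobs not reachable for removal \ldots is at most $R$'') does not handle a machine whose load is dominated by such medium-sized jobs: the pairing step you need has no small job to point to, and the amortization fails.

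The paper's resolution is to introduce a genuinely new third job class — \emph{medium} jobs, $9/17 < p_j < 11/17$ — and with it a new blocker type, the \emph{medium blocker}, holding the set of medium jobs on the offending machine, plus a new move type (huge-to-medium) triggered precisely when $p(j) + p(S_i) \le 1+R$ but $p(j) + p(S_i \cup M_i) > 1+R$, and a split of big jobs into \emph{large} ($<14/17$) and \emph{huge} ($\ge 14/17$), with large jobs permitted to enter medium-blocker machines. The dual weights are then a piecewise-\emph{constant} rounding — big $\mapsto 11/17$, medium $\mapsto 9/17$, small $\mapsto p_j$ — not a cap $\min(p_j, \tau)$; this specific discretization is what makes both the configuration-constraint check and the blocker-pairing inequalities close at $33/17$. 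And the amortization does remain a simple pairing of each huge-to-small blocker with the immediately following small-move blocker, just as in the two-size case, only with a three-way case split on how many medium jobs sit on the blocked machine; no fractional charging along chains is needed. So: right skeleton, right place to worry, but the medium class, medium blocker, huge-to-medium move, and large/huge split are the missing load-bearing pieces, and your proposed dual weights and amortization would not carry through without them.
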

Throughout this section we let $R = \frac{16}{17}$.  The proof follows
closely the proof of Theorem~\ref{thm:simple}, i.e., we design a local
search algorithm that returns a solution with makespan at most $1+R$,
assuming the \CLP{} is feasible. The strategy is again to repeatedly
call a procedure, now summarized in Algorithm~\ref{algo:ExtSched}, that extends
a given partial schedule by assigning a new job while maintaining a
valid schedule. Recall that a partial schedule is valid if each
machine $i\in \MA$ is assigned at most one big job and
$p(\sigma^{-1}(i)) \leq 1+R$, where now $R= \frac{16}{17}$. We note
that a machine is only assigned at most one big job will be a
restriction here.

Since we allow jobs of any sizes we need to define what big and small
jobs are. In addition, we  have medium jobs and partition the big
jobs into large and huge jobs.
\begin{definition}
  A job $j$ is called \emph{big} if $p_j \geq {11}/{17}$,
  \emph{medium} if $11/17 > p_j > 9/17$, and \emph{small} if $p_j \leq
  9/17$. Let the sets $\JO_B, \JO_M, \JO_S$ contain the big, medium,
  and small jobs, respectively. We will also call a big job $j$ \emph{huge}
  if $p_j \geq 14/17$ and otherwise \emph{large}.
\end{definition}
The job sizes are chosen so as to optimize the achieved
approximation ratio with respect to the analysis and were obtained by
solving a linear program.

We shall also need to extend and change some of the concepts used in
Section~\ref{sec:simple}. The final goal is still  that the
procedure shall choose potential moves so as to guarantee (i)  that the
procedure terminates and that (ii) if no potential move exists then we
shall be able to prove that the dual of \CLP{} is unbounded.  

The main difficulty compared to the case in Section~\ref{sec:simple}
of two job sizes is the following.  A key step of the analysis in the
 case with only small and big jobs was that if small jobs blocked the move of a big job then
we guaranteed that one of the small jobs on the blocking machine had a
potential small move.  This was to allow us to amortize the load when
analyzing the dual. In the general case, this might not be possible
when a medium job is blocking the move of a huge job. Therefore, we
need to introduce a new conceptual type of blockers called medium
blockers that will play a similar role as big blockers but instead of
containing a single big job they  contain at least one medium job that
blocks the move of a huge job. Rather unintuitively, we allow for
technical reasons large but not medium or huge jobs to be moved
to machines in medium blockers.

Let us now point out the modifications needed starting with the tree
$\CompTree$ of blockers.

\paragraph{Tree of blockers.} Similar to Section~\ref{sec:simple},
Algorithm~\ref{algo:ExtSched}  remembers its history by using the dynamic
tree $\CompTree$ of blockers. 
As already mentioned, it will now  also have medium
blockers that play a similar role as big blockers but instead of
containing a single big job they  contain a set of medium jobs. We
 use $\MAM(\CompTree)$ to refer to the subset of
$\MA(\CompTree)$ containing the machines in medium blockers.

As in the case of two job sizes, Algorithm~\ref{algo:ExtSched}
initializes tree \CompTree{} with the special small blocker as root
that consists of the job $j_{new}$. The next step of the procedure is
to repeatedly choose valid and potential moves until we can eventually assign
$j_{new}$.
During its execution, the procedure  now updates
$\CompTree$ based on which move that is chosen so that
\begin{enumerate}
\item $\MAS(\CompTree)$ contains those machines to which the algorithm will not try to move any jobs;
\item $\MAB(\CompTree)$ contains those machines to which the algorithm will not try to move any huge, large, or medium jobs;
\item $\MAM(\CompTree)$ contains those machines to which the algorithm will not try to move any huge or medium jobs;

\item $\JO(\CompTree)$
  contains those jobs that the algorithm wishes to move.
\end{enumerate}
We can see that small jobs are treated as in Section~\ref{sec:simple},
i.e., they can be moved to all machines apart from those in
$\MAS(\CompTree)$. Similar to big jobs in that section, huge and
medium jobs can only be moved to a machine not in $\MA(\CompTree)$.
The difference lies in how large jobs are treated: they are allowed to
be moved to machines not in $\MA(\CompTree)$ but also to those
machines only in $\MAM(\CompTree)$.

\paragraph{Potential moves.} As before a potential move $(j,i)$ will be called
\emph{valid} if  the update $\sigma(j)
\leftarrow i$ results in a valid schedule, but the definition of
potential moves needs  to be extended to include the different job
sizes. The definition for small jobs remains unchanged: a move $(j,i)$
of a small job $j \in \JO(\CompTree)$ is a potential small move if $i
\not \in \MAS(\CompTree)$.  A subset of the potential moves of  medium and
large jobs will also be called potential small moves.

\begin{definition}
  A move $(j,i)$ of a  medium or large job $j\in \JO(\CompTree)$ satisfying $i \not \in \MA(\CompTree)$ if $j$ is medium and $i \not \in \MAB(\CompTree) \cup \MAS(\CompTree)$ if $j$ is large is  a potential
  \begin{description}
  \item[\textnormal{small move:}] if $i$ is not assigned a big job.

    \item[\textnormal{medium/large-to-big:}]  if $i$ is assigned a big job.
    
\end{description}
\end{definition}
Note that the definition takes into account the convention that large
jobs are allowed to be assigned to machines not in $\MAB(\CompTree)
\cup \MAS(\CompTree)$ whereas medium jobs are only allowed to be
assigned to machines not in $\MA(\CompTree)$.  The reason for
distinguishing between whether $i$ is assigned a big (huge or large)
job will become apparent in the analysis. The idea is that if a
machine $i$ is not assigned a big job then if it blocks a move of a
medium or large job then the dual variables $z^*$ will satisfy
$\sum_{j \in\sigma^{-1}(i)} z_j^* \geq 1$, which we cannot
guarantee if $i$ is assigned a big job  since when setting
the dual variables we will round down the sizes of  big jobs which might
decrease the sum by as much as $6/17$.

It remains to define the potential moves of huge jobs.
\begin{definition}
A  move $(j, i)$ of a huge job $j\in \JO(\CompTree)$ to a machine $i \not \in \MA(\CompTree)$ is a potential
\begin{description}
\item[\textnormal{huge-to-small move}:]  if no big job is
  assigned to $i$ and $p(j) +
  p(S_i \cup M_i) \leq 1+R$;
\item [\textnormal{huge-to-big move:}] if a big job is assigned to $i$ and $p(j) +
  p(S_i \cup M_i) \leq 1+R$;

\item [\textnormal{huge-to-medium move:}] if $p(j) + p(S_i) \leq 1+R$ and
  $p(j) + p(S_i\cup M_i) > 1+R$; 
\end{description}
where
$S_i= \{j\in \sigma^{-1}(i): j \mbox{ is small with }\Gamma_\sigma(j) \subseteq \MAS(\CompTree)\}$ and $M_i = \JO_M \cap \sigma^{-1}(i)$.
\end{definition}
Again $S_i$ denotes the set of small jobs assigned to $i$ with no
potential moves with respect to the current tree. The set $M_i$
contains the medium jobs currently assigned to $i$. Moves
huge-to-small and huge-to-big correspond to the moves big-to-small and
big-to-big in Section~\ref{sec:simple}, respectively. The constraint
$p(j) + p(S_i \cup M_i) \leq 1+R$ says that such a move should only be
chosen if it can become valid by not moving any medium jobs
assigned to $i$. The additional move
called huge-to-medium covers the case when moving the medium jobs
assigned to $i$ is necessary for the move to become valid.

Similar to before, the behavior of the algorithm depends on the
type of the chosen potential move. The treatment compared to that in
Section~\ref{sec:simple} of valid moves and potential small moves is
unchanged; the huge-to-small move is treated as the big-to-small move;
and the medium/large-to-big and huge-to-big moves are both treated as
the big-to-big move were in Section~\ref{sec:simple}. It remains to
specify what Algorithm~\ref{algo:ExtSched} does in the case when a potential
huge-to-medium move (that is not valid) is chosen, say $(j,i)$ of a job $j \in \JO(B)$ for
some blocker $B$. In that case the algorithm adds a medium blocker
$B_M$ as child to $B$ that consists of the machine $i$ and the medium
jobs assigned to $i$. This prevents other huge or medium jobs to
be assigned to $i$. Also note that constraints $p(j) + p(S_i) \leq
1+R$ and $p(j) + p(S_i \cup M_i) > 1+R$ imply that there is at least
one medium job assigned to $i$.

We remark that the rules on how to update $\CompTree$ is again so that a job
can be in at most one blocker whereas a machine can now be in at most three
blockers (this can happen if it is first added in a medium blocker, then in a  big blocker, and finally in  a small blocker).
\paragraph{Values of moves.}

As in Section~\ref{sec:simple}, it is important in which order the
moves are chosen, Therefore, we assign a value vector to each
potential move and Algorithm~\ref{algo:ExtSched} chooses then, in each
iteration, the move with smallest lexicographic value.

\begin{definition}
\label{def:diffval}
If we let $L_i=\sigma^{-1}(i)$ then a potential move $(j,i)$ has value
$$
\val(j,i) = \left\{
\begin{array}{ll}
(0, 0) & \mbox{ if valid,} \\
(p(j),p(L_i)) & \mbox{ if small move,} \\
(2,0) & \mbox{ if  medium/large-to-big,} \\
(3, p(L_i)) & \mbox{ if huge-to-small,} \\
(4, 0) & \mbox{ if huge-to-big,} \\
(5, |L_i\cap \JO_M|) & \mbox{ if huge-to-medium.} 
\end{array}
\right.
$$
\end{definition}
Note that as before, the algorithm chooses a valid move if
available and a potential small move before any other potential move.
Moreover, it  chooses the potential small move of the smallest job
available.

\paragraph{The algorithm.}

Algorithm~\ref{algo:ExtSched} summarizes the algorithm concisely using
the concepts described previously. Given a valid partial schedule
$\sigma$ and an unscheduled job $j_{new}$, we shall prove that it preserves
a valid schedule by moving jobs until it can assign $j_{new}$.
Repeating the procedure until all jobs are assigned then yields
Theorem~\ref{thm:main}.

\begin{algorithm*}[hbtp]
\caption{ExtendSchedule($\sigma, j_{new}$):}
\label{algo:ExtSched}
\begin{algorithmic}[1]
\STATE Initialize \CompTree{} with the root $\JO(B) = \{j_{new}\}$ and
$\MA(B) = \bot$\;

\WHILE{$\sigma(j_{new})$ is  TBD} 
  \STATE Choose a potential move $(j,i)$ with $j\in \JO(\CompTree)$ of minimum lexicographic value\;
  \STATE Let $B$ be the blocker in $\CompTree$ such that $j \in \JO(B)$\;
  \IF{$(j,i)$ is valid} 
   \STATE Update the schedule by $\sigma(j) \leftarrow i$\;

   \STATE Update $\CompTree$ by removing $B$ and all blockers added after $B$\;
  
  \ELSIF{$(j,i)$ is either a potential small or  huge-to-small move}
  
  \STATE Add a small blocker $B_S$ as child to $B$ with  $\MA(B_S) = i$ and
    $\JO(B_S) = \sigma^{-1}(i)\setminus \JO(\CompTree)$;
  \ELSIF{$(j,i)$ is either a potential large/medium-to-big or huge-to-big move} 
  \STATE Let $j_B$ be the big job such that $\sigma(j_B) = i$\;
   \STATE Add a big blocker $B_B$ as a child to $B$ in \CompTree{} with 
    $\JO(B_B) = \{j_B\}$ and  $\MA(B_B) = i$\;

  \ELSE[$(j,i)$ \emph{is a potential huge-to-medium move}] 
    \STATE Add a medium blocker $B_M$ as child to $B$  with  $\MA(B_M) = i$ and $\JO(B_M) =   \sigma^{-1}(i) \cap \JO_M$\;
    \ENDIF
\ENDWHILE
\RETURN $\sigma$\;

\end{algorithmic}
\end{algorithm*}

\hide{
\begin{procedure}[H]
  \SetKwInOut{Input}{Input}\SetKwInOut{Output}{Output}
  \SetKwData{Tree}{ComponentTree}
  \SetKwData{SM}{PossibleMoves}
  \SetAlgoNoEnd
  \SetKwFunction{Try}{TrySmallMove}
  \SetKwFunction{Update}{UpdateSchedule}
  \SetKwData{B}{Blocking}
Initialize  \CompTree{} with the root  $\JO(B) = \{j_{new}\}$ and $\MA(B) = \bot$\;

\While{$\sigma(j_{new})$ is  TBD} {
  Choose a potential move $(j,i)$ with $j\in \JO(\CompTree)$ of minimum lexicographic value\;

  Let $B$ be the blocker in $\CompTree$ such that $j \in \JO(B)$\;
  \If{$(j,i)$ is valid} {
    Update the schedule by $\sigma(j) \leftarrow i$\;

    Update $\CompTree$ by removing $B$ and all blockers added after $B$\;
  }
  \ElseIf{$(j,i)$ is either a potential small or  huge-to-small move}
  {
    Add a small blocker $B_S$ as child to $B$ with  $\MA(B_S) = i$ and
    $\JO(B_S) = \sigma^{-1}(i)\setminus \JO(\CompTree)$;
  }
  \ElseIf{$(j,i)$ is either a potential large/medium-to-big or huge-to-big move} {
    Let $j_B$ be the big job such that $\sigma(j_B) = i$\;
    Add a big blocker $B_B$ as a child to $B$ in \CompTree{} with 
    $\JO(B_B) = \{j_B\}$ and  $\MA(B_B) = i$\;
  }
  \Else({$(j,i)$ \emph{is a potential huge-to-medium move}}) {Add a medium blocker $B_M$ as child to $B$  with  $\MA(B_M) = i,$ and $\JO(B_M) =   \sigma^{-1}(i) \cap \JO_M$\;} 
}
\Return $\sigma$\;
\caption{ExtendSchedule($\sigma, j_{new}$):}
\label{algo:ExtSched}
\end{procedure}
}

\subsection{Analysis}

As Algorithm~\ref{algo:SimpleExtSched} for the case of two job sizes,
Algorithm~\ref{algo:ExtSched} only updates the schedule if a valid
move is chosen so it follows that the schedule stays valid throughout
the execution. It remains to verify that the algorithm terminates and
that the there always is a potential move to choose.

The analysis is similar to that of the simpler case but involves more
case distinctions.
 When arguing
about $\CompTree$ we again let

\begin{itemize}
\item $\CompTree_t$ be the subtree of $\CompTree$ induced by the   blockers $B_0, B_1, \dots, B_t$;
\item  $S(\CompTree_t) = \{j\in \JO: \mbox{$j$ is small with $\Gamma_\sigma(j)
    \subseteq \MAS(\CompTree_t)$}\}$ and often refer to $S(\CompTree)$ by simply $S$; and
\item $S_i(\CompTree_t) = \sigma^{-1}(i) \cap S(\CompTree_t)$  (often
  refer to $S_i(\CompTree)$ by  $S_i$).
\end{itemize}
As in the case of two job sizes, the set $S(\CompTree_t)$ contains the set
of small jobs with no potential moves with respect to
$\CompTree_t$. Therefore no job in $S(\CompTree_t)$ has been
reassigned since $\CompTree_t$ became a subtree of $\CompTree$, i.e.,
since $B_t$ was added.  We can thus again omit without ambiguity the dependence on $\sigma$
when referring to $S(\CompTree_t)$ and $S_i(\CompTree_t)$. 
The following observation will again be used  throughout the analysis. No job in a blocker $B$ of $\CompTree$ has been reassigned
after $B$ was added since that would have caused the
algorithm to remove $B$ (and all blockers added after $B$).

%
\hide{
Observation~\ref{obs:simple} we have the following from the
fact that Algorithm~\ref{algo:ExtSched} removes a blocker either (i) if any
blocker added before it is removed from \CompTree{} or (ii) if a job in the
blocker is reassigned.
\begin{observation}
\label{obs:diff}
Procedure~\ref{algo:ExtSched} satisfies the following
invariant. Order the blockers $B_0,B_1, B_2, \dots, B_\ell$ of
$\CompTree$ in the order they were added ($B_0$ is the special root
blocker). If we let $\CompTree_t$ denote the subtree of $\CompTree$
induced by $B_0, B_1,\dots, B_t$ then for $t=1,\dots, \ell$
\begin{enumerate} 
\item in the iteration when $B_t$ was added,  the tree of blockers equaled $\CompTree_{t-1}$ at the beginning and $\CompTree_t$ at the end of that iteration; 
\item each job $j\in \JO(\CompTree_{t})$ has not been reassigned  after $B_{t}$ was added to the tree of blockers.
\end{enumerate}
\end{observation}
}
%
 In Section~\ref{sec:existence} we start by
 presenting
  the proof that the algorithm always can choose a potential move if
  \CLP{} is feasible. We then present the proof that the algorithm
  always terminates in Section~\ref{sec:termination}.  As already
  noted, by repeatedly calling Algorithm~\ref{algo:ExtSched} until all
  jobs are assigned we will, assuming \CLP{} feasible, obtain a
  schedule of the jobs with makespan at most $1+R$ which completes the
  proof of Theorem~\ref{thm:main}.

\subsubsection{Existence of potential moves}
\label{sec:existence}
We prove that the algorithm never gets stuck if the \CLP{} is feasible.
\begin{lemma}
  \label{lemma:existence}
  If \CLP{} is feasible then Algorithm~\ref{algo:ExtSched} can
  always pick a  potential move.
\end{lemma}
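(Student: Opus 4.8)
The plan is to mirror the proof of Lemma~\ref{lemma:simplefis}: assume the algorithm is stuck at some iteration with no potential move available, and from this build a feasible dual solution $(y^*,z^*)$ of negative objective value, contradicting feasibility of \CLP{} by unboundedness of the dual. The dual variables should be set so that $z^*_j$ equals a rounded-down version of $p_j$ for jobs we care about (those in $\JO(\CompTree)\cup S$), and $0$ otherwise. Concretely I would set $z^*_j = p_j$ for small jobs in $\JO(\CompTree)\cup S$; for medium jobs in $\JO(\CompTree)$ something like $z^*_j = 9/17$ (the medium lower bound); and for big jobs in $\JO(\CompTree)$ round down to $11/17$. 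For the $y^*$ side, set $y^*_i = 1$ for $i\in \MAS(\CompTree)$ and $y^*_i = \sum_{j\in\sigma^{-1}(i)} z^*_j$ otherwise. The choice of the constant $R=16/17$ and the job-size thresholds $9/17,11/17,14/17$ are exactly calibrated so that the two claims below go through.

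First I would prove feasibility, i.e.\ $y^*_i \ge \sum_{j\in C} z^*_j$ for every machine $i$ and every configuration $C\in\conff{i}$. For $i\in\MAS(\CompTree)$ this is immediate since $z^*_j\le p_j$ and $p(C)\le 1$. For $i\notin\MAS(\CompTree)$ the critical case is a configuration consisting of a single big job $j_B$ (if $j_B$ is huge we get $z^*_{j_B}$ up to roughly $1$, so this is the binding constraint); more generally one must check configurations containing a huge/large/medium job together with small jobs. Here the ``no potential move'' assumption is used: if a huge job in $\JO(\CompTree)$ could move to $i$, then one of the huge-to-$*$ moves fails, which forces $p(S_i\cup M_i)$ (or $p(S_i)$) to be large, hence $y^*_i = \sum_{j\in\sigma^{-1}(i)} z^*_j$ is already at least the needed threshold. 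The case analysis splits on whether $i\in\MAB(\CompTree)$, $i\in\MAM(\CompTree)$, or $i\notin\MA(\CompTree)$, and on whether the blocked job is medium or large (for which only $\MAB\cup\MAS$ matters) versus huge; the distinction between $i$ carrying a big job or not is exactly why big jobs are rounded to $11/17$ — the rounding loses at most $6/17$, and $R=16/17$ is chosen so $1+R - 6/17 = 27/17$ still leaves room.

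Second I would show $\sum_i y^*_i < \sum_j z^*_j$. Expanding $\sum_i y^*_i = |\MAS(\CompTree)| + \sum_{i\notin\MAS(\CompTree)}\sum_{j\in\sigma^{-1}(i)} z^*_j$, it suffices to bound $|\MAS(\CompTree)|$ by $\sum_{i\in\MAS(\CompTree)}\sum_{j\in\sigma^{-1}(i)} z^*_j$ with a bit of slack (the slack being absorbed by $z^*_{j_{new}}>0$, which appears on the right but not the left). For each small blocker $B_t$ with $\MA(B_t)=i_t$, added because a potential small move or a huge-to-small move at $i_t$ was not valid, one gets a lower bound on $\sum_{j\in\sigma^{-1}(i_t)} z^*_j$; when this bound is already $\ge 1$ we are done for that blocker, and when it is only $>R$ (the huge-to-small, or the small-move-blocked-by-a-big-job case) we amortize against a neighbouring small blocker. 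As in Section~\ref{sec:simple} the amortization argument requires showing that a ``deficient'' small blocker is immediately followed in the linear order by a small blocker added because of a genuine small move — this follows because the failure of the huge-to-small move at $i_t$ forces $p(\sigma^{-1}(i_t)) > R \ge p(S_{i_t})$, so a small job on $i_t$ still has a potential small move, which (having smallest lexicographic value among non-valid moves) must be the next blocker. I expect the main obstacle to be the feasibility check in the general case: there are now many configuration shapes (one big job; one medium plus small jobs; two medium jobs plus small jobs; a large job on a medium-blocker machine; etc.), and one must verify for each that the relevant ``no potential move'' condition kicks in and yields the right numeric bound — this is where the linear program that produced the thresholds is really doing the work, and the argument has to be organized carefully by the type of the largest job in $C$ and the blocker-membership of $i$.
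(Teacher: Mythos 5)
Your proposal follows the same approach as the paper: assume no potential move exists, build the dual solution $(y^*,z^*)$ with big jobs in $\JO(\CompTree)$ rounded to $11/17$, medium to $9/17$, small kept at $p_j$ (and $y^*_i=1$ on $\MAS(\CompTree)$, $\sum_{j\in\sigma^{-1}(i)}z^*_j$ elsewhere), verify feasibility by case analysis on whether $i$ is in a small/big/medium blocker or none, and show the objective is negative via a per-small-blocker accounting with amortization.

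One detail in your plan as written would not go through: you propose amortizing both the huge-to-small case \emph{and} the ``small move blocked by a machine carrying a big job'' case against a subsequent small blocker, by the same mechanism. The argument that a small job on $i_t$ must have a potential small move (so $B_{t+1}$ is a small blocker created by a genuine small move of a small job) relies on contrasting the huge-to-small precondition $p(j_0)+p(S_{i_t}\cup M_{i_t})\le 1+R$ with the not-valid inequality $p(j_0)+p(\sigma^{-1}(i_t))>1+R$; there is no analogous precondition for a small move, so this half of your amortization has no support. Fortunately you do not need it: a failed small move of a \emph{small} job $j_0$ gives $p(\sigma^{-1}(i_t))>1+R-9/17=24/17$, and since a valid schedule puts at most one big job on $i_t$ (rounding loss at most $6/17$) and medium jobs each lose $2/17$ (but two medium jobs already contribute $18/17$), one gets $\sum_{j\in\sigma^{-1}(i_t)}z^*_j\ge 18/17$ directly, with no amortization. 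Small moves of medium/large jobs similarly give $\ge 1$ outright, since there $i_t$ carries no big job and $p(j_0)<14/17$. Only huge-to-small blockers, whose contribution can drop to $14/17$ when a single medium job sits on $i_t$, actually need the pairing, and for those alone the existence of the following small blocker is guaranteed by the argument above.
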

\begin{proof}
  Suppose  that the algorithm has reached an iteration where no  potential move is available.
  Similar to the proof of Lemma~\ref{lemma:simplefis} we will show
  that this implies that the dual is unbounded and hence the primal is
  not feasible. We will do so by defining a solution $(y^*, z^*)$ to
  the dual with $\sum_{i\in \MA} y^*_i < \sum_{j\in \JO}
  z^*_j$. Define $z^*$ and $y^*$ by
  $$
  z^*_j = \begin{cases}
    11/17,  & \mbox{if $j\in \JO(\CompTree)$ is big},\\
    9/17, & \mbox{if $j \in \JO(\CompTree)$ is medium}, \\
    p_j, & \mbox{if $j \in \JO(\CompTree)\cup S$ is small}, \\
    0, & \mbox{otherwise,}
  \end{cases}
$$
and
$$
  y^*_i = \begin{cases}
    1 & \text{if $i \in \MAS(\CompTree)$}, \\
    \sum_{j\in \sigma^{-1}(i)} z_j^* & \text{otherwise.}
  \end{cases}
  $$
  A natural interpretation of $z^*$ is that we have rounded down processing
times where big jobs with processing times in $[11/17,1]$ are rounded
down to $11/17$; medium jobs with processing times in $[9/17, 11/17]$
are rounded down to $9/17$; and small jobs are left unchanged.

The proof of the lemma is now completed by showing that $(y^*, z^*)$ is a feasible solution (Claim~\ref{claim:compfis}) and that the objective value is negative (Claim~\ref{claim:compneg}).
  \begin{claim}
    \label{claim:compfis}
    Assuming no potential moves are available, $(y^*, z^*)$ is a feasible solution.
  \end{claim}
  \begin{proofclaim}
    We need to verify that $y_i^* \geq \sum_{j\in C} z_j^*$ for each
  $i\in \MA$ and each $C\in \conff{i}$. Recall that the total
  processing time of the jobs in a configuration is at most $1$.  Also
  observe as in the simpler case that $z_j^*=0$ for jobs not in
  $\JO(\CompTree) \cup S$ and we can thus omit them when verifying the
  constraints.

  As in the previous case, no constraint involving the variable
  $y_i^*$ for $i\in \MAS(\CompTree)$ is violated. Indeed, for such a
  machine $i$ we have $y_i^* = 1$ and  $\sum_{j\in \conff{i}} z_j^* \leq \sum_{j\in \conff{i}} p_j
  \leq 1$ for any configuration $C \in
  \conff{i}$.

  We continue by distinguishing between the remaining cases when
  $i\not \in \MA(\CompTree), i\in \MAB(\CompTree)\setminus
  \MAS(\CompTree)$, and $i\in \MAM(\CompTree)\setminus
  (\MAS(\CompTree) \cup \MAB(\CompTree))$. 
  \paragraph{$i \not \in \MA(\CompTree)$:} A move $(j,i)$ to machine
  $i\not\in \MA(\CompTree)$ is a potential move if $j\in
  \JO(\CompTree)$ is a small, medium, or large job. By assumption we
  have thus that no such jobs in $\JO(\CompTree)$ have any moves to
  $i$. We also have, by definition, that no job in $S$ has a move to
  $i\not \in \MAS(\CompTree)$. 

  Now consider the final case when a huge job $j_H \in \JO(\CompTree)$ has a
  move $(j_H,i)$. Then since it is not a potential huge-to-small, huge-to-medium, or huge-to-big move
$$p(j_H) + p(S_i) > 1+R \qquad \mbox{and hence} \qquad p(S_i) > R.$$
 Since $p(j_H) \geq {14}/{17}$ but $z_{j_H}^* = 11/17$ a configuration $C\in \conff{i}$ with $j_H\in C$
satisfy
\begin{equation}
\label{eq:mastertwo}
\sum_{j\in C} z_j^* \leq z_{j_H}^* + (1-p(j_H)) \leq \frac{11}{17} + \left(1- \frac{14}{17}\right) = \frac{14}{17},
\end{equation}
which is less than $ \frac{16}{17} = R < p(S_i) \leq \sum_{j\in \sigma^{-1}(i)} z_j^* \leq y^*_i$ where we used that $S_i$ only contains small
jobs for the second inequality.

\paragraph{$i \in \MAB(\CompTree)\setminus \MAS(\CompTree)$:} A move
to machine $i\not \in \MAS(\CompTree)$ is a potential move if $j\in
\JO(\CompTree)$ is a small job. By the assumption that there are no
potential moves and by the definition of $S$, we have thus that there
is no small job in $\JO(\CompTree) \cup S$ with a move to $i$. In other words, all small jobs with
  $z_j^* >0$ that can be moved to $i$ are already assigned to $i$.
Now let
$j_B$ be the big job such that $\sigma(j_B) = i$ which must exist
since machine $i$ is contained in a big blocker. As both medium and big (large and
huge) jobs have processing time strictly greater than $1/2$, a
configuration $C\in \conff{i}$ can contain at most one such job
$j_0$. Since $z_{j_0}^* \leq 11/17$ and $z_{j_B}^* = 11/17$ we have
that  a configuration $C\in \conff{i}$ with $j_0\in C$ cannot violate feasibility. Indeed,
  \begin{equation}
    \label{eq:master}
  \sum_{j\in C} z_j^* \leq z_{j_0}^* + \sum_{j\in
    \sigma^{-1}(i)\setminus j_B} z_j^* \leq z_{j_B}^* + \sum_{j\in
    \sigma^{-1}(i)\setminus j_B} z_j^* = y^*_i.
  \end{equation}

\paragraph{$i\in \MAM(\CompTree)\setminus (\MAS(\CompTree)\cup
  \MAB(\CompTree)):$} By the definition of $S$ and  the assumption that there are no potential
  moves, no small or large jobs in $\JO(\CompTree)\cup S$ have any moves to
  $i$.   As $i$ is contained in a medium blocker, there is a medium job $j_M$ such that $\sigma(j_M) = i$.
  As both medium and huge jobs have processing time strictly greater
  than $1/2$, a configuration $C\in \conff{i}$ can contain at
  most one such job $j_0$. 

  If $j_0$ is medium then $z_{j_0}^* \leq z_{j_M}^*$ and we have by
  Inequality~\eqref{eq:master} (substituting $j_B$ with $j_M$) that
  the constraint is not violated.  Otherwise if $j_0$ is huge then
  from~\eqref{eq:mastertwo}
  (substituting $j_H$ with $j_0$) we have that $\sum_{j\in C} z_j^* \leq
  \frac{14}{17}$ for any configuration $C\in \conff{i}$ with $j_0 \in C$.

  We shall now show that $y_i^* \geq \frac{14}{17}$ which completes
  this case. If $i$ is assigned more than one medium job then  $y_i^* \geq 1$.
  Instead suppose that the only medium job assigned to $i$ is
  $j_M$. Since the medium blocker $B_M$, with $\MA(B_M) =
  i$ and  $\JO(B_M) = \{j_M\}$, was added to $\CompTree$ because the algorithm chose a
  potential  huge-to-medium move we have
$$
p(S_i(\CompTree')) + p(j_M) > R.$$
where $\CompTree'$ refer to the tree of blockers  at the time when $B_M$ was
added.

As each blocker in $\CompTree'$ is also in $\CompTree$, $p(S_i) \geq p(S_i(\CompTree'))$ and  the above inequality
yields,
$$
y_i^* \geq p(S_i) + p(j_M) - (p(j_M) - z_{j_M}^*) \geq  R - \left(\frac{11}{17} - \frac{9}{17}\right) \geq \frac{14}{17},
$$
as required.

We have thus verified all the cases which completes the proof of the
claim.
\end{proofclaim}

Having verified that $(y^*, z^*)$ is a feasible solution, the proof is now completed by showing that the value of the solution is negative.

\begin{claim}
\label{claim:compneg}
We have that  $\sum_{i\in \MA} y^*_i < \sum_{j \in \JO} z_j^*$.
\end{claim}
\begin{proofclaim}
 By the definition of $y^*$, 
\begin{equation}
\label{eq:diffval}
\sum_{i\in \MA} y^*_i =
\sum_{i\in \MAS(\CompTree)} 1 + \sum_{i \not \in \MAS(\CompTree)}
\sum_{j\in\sigma^{-1}(i)} z_j^*.
\end{equation}
Similar to the  case with two job sizes in Section~\ref{sec:simple}, we proceed by
bounding $\sum_{i\in \MAS(\CompTree)} 1$ from above by $\sum_{i \in
  \MAS(\CompTree)} \sum_{j\in \sigma^{-1}(i)} z_j^*$. 
  Let $B_0, B_1, \dots, B_\ell$ be the blockers of $\CompTree$ and
  consider a small blocker $B_t$ for some $t=1, \dots, \ell$ with $\MA(B_t) = i_t$.

By the definition of the procedure, small
blocker $B_t$ has been added in an iteration when either a potential
small or huge-to-small move $(j_0,i_t)$ was chosen.  Let us
distinguish between the three cases when $(j_0, i_t)$ was a small move
of a small job, $(j_0, i_t)$ was a small move of a medium or large job,
and $(j_0, i_t)$ was a huge-to-small move.

\paragraph{$(j_0, i_t)$ was a small move  of  a small job:} Since the move was not valid,
$$
p(j_0) + p(\sigma^{-1}(i_t)) > 1+R.
$$
As a big job with processing time in $[11/17, 1]$ is rounded down to
$11/17$ and a medium job with processing time in $[9/17, 11/17]$ is
rounded down to $9/17$, the sum
$\sum_{j\in \sigma^{-1}(i_t)} z_j^*$ will depend of the number of big
and medium jobs assigned to $i_t$. Indeed, if we let $L_{i_t} =
\sigma^{-1}(i_t) \cap \JO_B$ and $M_{i_t} = \sigma^{-1}(i_t) \cap
\JO_M$ be the big and medium jobs assigned to $i_t$,
respectively. Then on the one hand,
$$
\sum_{j\in \sigma^{-1}(i_t)} z_j^* \geq 1+R - p(j_0) -
\left(1-\frac{11}{17}\right)|L_{i_t}| - \left(\frac{11}{17} - \frac{9}{17}\right)|M_{i_t}|,
$$
which equals
$$ \frac{33}{17} - \frac{6}{17}\left(|L_{i_t}| + \frac{|M_{i_t}|}{3}\right) - p(j_0).
$$
On the other hand, if (i) $|L_{i_t}| \geq 2$,  (ii) $|L_{i_t}| \geq 1$ and $|M_i| \geq 1$ or (iii)
$|M_{i_t}| \geq 3$ then clearly 
$$\sum_{j\in \sigma^{-1}(i_t)} z_j^* \geq
\frac{11}{17} + \frac{9}{17} \geq \frac{20}{17}.
$$
Combining these two bounds we get (since $j_0$ is small and thus $p(j_0) \leq 9/17$) that
\begin{align}
\label{eq:jsmall}
\sum_{j\in \sigma^{-1}(i_t)} z_j^* &\geq  \min \left[\frac{27}{17}- p(j_0),  \frac{20}{17}\right] \geq \frac{18}{17} & \mbox{if $(j_0,i_t)$ was a  small move of a small job.}
\end{align}

\paragraph{$(j_o, i_t)$ was a small move of a medium or large job:}Since the move was not valid, we have again 
$$
p(j_0) + p(\sigma^{-1}(i_t)) > 1+R.
$$
By the definition of potential small moves it must be that there is no big job assigned to $i_t$. If there are more than one medium job assigned to $i$  then clearly $\sum_{j\in \sigma^{-1}(i_t)} z_j^* \geq 1$. 

Otherwise, if there is  at most one medium job assigned to $i_t$ that might have been rounded down from $11/17$ to $9/17$ we use the inequality $p(j_0) + p(\sigma^{-1}(i_t)) > 1+R$ to derive
$$
\sum_{j\in \sigma^{-1}(i_t)} z_j^*  \geq 1+ R -p(j_0) - \left(\frac{11}{17}- \frac{9}{17}\right) = \frac{31}{17} - p(j_0)\geq 1,
$$
For the final inequality we used that the processing time of medium
and large jobs is at most ${14}/{17}$.  Summarizing again, we have
\begin{align}
\sum_{j\in \sigma^{-1}(i_t)} z_j^* &\geq  1 & \mbox{if $(j_0,i_t)$ was a potential small move of a medium or large job.}
\end{align}

\paragraph{$(j_0, i_t)$ was a huge-to-small move:}
Similar to the  case of two job sizes in Section~\ref{sec:simple}  where we considered the big-to-small move, we will  show that we can amortize the cost from the blocker $B_{t+1}$. Indeed, since the move $(j_0, i_t)$ was not valid and by the definition of a potential huge-to-small move we have
\begin{equation}
\label{eq:diffar}
p(j_0) + p(\sigma^{-1}(i_t)) > 1+ R \qquad \mbox{and} \qquad
p(j_0) + p\left(S_{i_t}(\CompTree_t) \cup M_{i_t}\right) \leq 1+R,
\end{equation}
where we let $M_{i_t}$ contain the medium jobs
assigned to $i_t$ at the time when $B_t$ was added.
Recall that $S_{i_t}(\CompTree_t)$ contains those small jobs that have
no potential moves with respect to the tree of blockers $\CompTree_t$
after $B_t$ was added. As $B_t$ has not
been removed from $\CompTree$ both these sets have not changed. In
particular, since $i_t$ is not assigned a big job (using the
definition of huge-to-small move) we have that there must be a small
job $j'\in \sigma^{-1}(i_t) \setminus (S_{i_t}(\CompTree_t) \cup M_{i_t})$ that has a potential small $(j', i')$ move with respect to
$\CompTree_t$.

The above discussion implies that $t< \ell$. Moreover, the value of
$(j',i')$ equals $(p(j'), p(\sigma^{-1}(i')))$. As the procedure
always chooses a potential move with minimum lexicographic value we have that
$B_{t+1}$ is a small blocker added because a potential small move was
chosen of a small job with processing time at most $p(j')$. 

If we let $\MA(B_{t+1}) = i_{t+1}$ we have thus
by~\eqref{eq:jsmall}
\begin{equation}
\label{eq:nyy}
\sum_{j\in \sigma^{-1}(i_{t+1})}z_j^* \geq  \min \left[\frac{27}{17}- p(j'),  \frac{20}{17}\right] \geq \frac{18}{17}.
\end{equation}
We proceed by showing that we can use this fact to again amortize the
cost as done in the simpler analysis, i.e., that
\begin{equation}
\label{eq:ammortize}
\sum_{j\in \sigma^{-1}(i_{t})}z_j^*
+ \sum_{j\in \sigma^{-1}(i_{t+1})} z_j^* \geq 2.
\end{equation}
For this reason, let us distinguish between three subcases depending on the number of medium jobs assigned to $i_t$.
\begin{description}
\item[\textnormal{$i_t$ is assigned at least two medium jobs:}] Since
medium jobs have value $9/17$ in the dual this clearly implies that
$\sum_{j\in \sigma^{-1}(i_t)} z_j^* \geq 18/17$ so no amortizing is needed  and~\eqref{eq:ammortize} holds in this case.

\item[\textnormal{$i_t$ is assigned one medium job:}] Let $j_M$ denote
  the medium job assigned to $i_t$.  In addition we have that the
  small job $j'$ is assigned to $i_t$. We have thus that $\sum_{j\in
    \sigma^{-1}(i)} z_j^* \geq 9/17 + p(j')$. So if $p(j') \geq 7/17$ then
  we can amortize because $\sum_{j\in \sigma^{-1}(i_{t+1})} z_j^*$ is at least
  $18/17$.

Now consider the case when $p(j') < 7/17$. By the assumption that only one
medium job is assigned to $i_t$ and since $p(j_0) +
p(\sigma^{-1}(i_t)) \geq 1+ R$ we have
$$
\sum_{j\in \sigma^{-1}(i_t)} z_j^* \geq R - \left(\frac{11}{17} -
  \frac{9}{17}\right) = \frac{14}{17}.
$$
Moreover, as  the case when $p(j') < 7/17$ is considered, we have from~\eqref{eq:nyy} that
$$
\sum_{j\in \sigma^{-1}(i_{t+1})} z_j^* \geq \frac{20}{17},
$$which implies that~\eqref{eq:ammortize} is valid also in this case.

\item[\textnormal{$i_t$ is not assigned any medium jobs:}] As $i_t$
  in this case is not assigned any medium or big jobs and  the
  huge-to-small move $(j_0,i_t)$ was not valid,
$$
 \sum_{j\in \sigma^{-1}(i_t)} z_j^*  > R  \geq \frac{16}{17}.
$$
That we can amortize now follows from~\eqref{eq:nyy} that
we always have
$
\sum_{j\in \sigma^{-1}(i_{t+1})} z_j^* \geq \frac{18}{17}.
$ 
\end{description}

We have thus shown that a blocker $B_t$ added because of a
huge-to-small move can be paired with the following blocker $B_{t+1}$
that was added because of a potential small move of a small job. As seen above
this give us that we can amortize the cost exactly as done in the
simpler case and
we deduce that $|\MAS(\CompTree)| \leq \sum_{i \in
  \MAS(\CompTree)} \sum_{j\in \sigma^{-1}(i)} z_j^*$. Combining this
inequality with~\eqref{eq:simpleval} yields
$$
\sum_{i\in \MA} y^*_i \leq \sum_{i\in \MA} \sum_{j\in \sigma^{-1}(i)} z_j^* = \sum_{j\in \JO} z_j^* - z_{j_{new}}^* < \sum_{j\in \JO} z_j^*,
$$
as required.
\end{proofclaim}

We have proved that there is a solution $(y^*, z^*)$ to the dual that
is feasible (Claim~\ref{claim:compfis}) and  has negative value
(Claim~\ref{claim:compneg}) assuming there are no potential moves. In
other words, the \CLP{} cannot be feasible if no potential moves can be
chosen which completes the proof of the lemma.
\end{proof}

\subsubsection{Termination}
\label{sec:termination}

As the algorithm only terminates when a new job is assigned,
Theorem~\ref{thm:main} follows from the lemma below together with
Lemma~\ref{lemma:existence} since then we can, as already explained,
repeat the procedure until all jobs are assigned.


\begin{lemma}
  Assuming there is always a potential move to choose,
  Algorithm~\ref{algo:ExtSched} terminates.
\end{lemma}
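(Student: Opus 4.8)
The plan is to follow the structure of the termination proof for the two--size case (Lemma~\ref{lemma:simpleterm}), with the extra case analysis forced by the richer set of move types and by the new medium blockers. As there, I assign to each blocker $B$ the value vector $\val(B)$ of the move that was chosen when $B$ was created, and to the tree with blockers $B_0,\dots,B_\ell$ the vector $(\val(B_0),\dots,\val(B_\ell),\infty)$. Whenever the algorithm adds a blocker this vector drops lexicographically, since a finite entry replaces the trailing $\infty$. The only other event is a valid move, which deletes a blocker together with all later ones; since each machine lies in at most three blockers there are at most $3|\MA|$ blockers at any time, so every maximal run of consecutive valid moves is finite and is followed either by termination or by a blocker--adding move. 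Moreover the coordinates of every $\val(B)$ are drawn from a finite set (a job size, a sum of job sizes, the number of medium jobs on a machine, or a constant). Hence it suffices to prove that a maximal run of valid moves together with the subsequent blocker addition strictly decreases the tree vector: then the sequence of tree vectors taken right after each blocker addition is strictly decreasing inside a finite set, so only finitely many blockers are ever added and the algorithm terminates.

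So consider such a run: the tree is $\CompTree$ with blockers $B_0,\dots,B_\ell$ when the run starts, it becomes the prefix $\CompTree_k$ with $B_0,\dots,B_k$ ($k<\ell$) when the run ends, and then a non--valid move $(j',i')$ is chosen. Exactly as in the simpler analysis, because $B_{k+1}$ was removed but $B_k$ was not, the last valid move of the run reassigned a job $j_{k+1}\in\JO(B_{k+1})$ away from $i_{k+1}:=\MA(B_{k+1})$; and the move $(j_t,i_{k+1})$ that created $B_{k+1}$ (with $j_t\in\JO(B_t)$, $t\le k$) is still a move, since $j_t$ has not been reassigned ($B_t$ was not removed), and it was a potential move with respect to $\CompTree_k$ when $B_{k+1}$ was added, so the set $S_{i_{k+1}}(\CompTree_k)$ and the membership of $i_{k+1}$ in $\MAS(\CompTree_k)$, $\MAB(\CompTree_k)$, $\MAM(\CompTree_k)$ are unchanged. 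Since the algorithm always picks the move of minimum value, $\val(j',i')<\val(B_{k+1})$ will follow once I show that, with respect to $\CompTree_k$ and the current schedule, $(j_t,i_{k+1})$ is either valid --- which is impossible, as $(j',i')$ is not valid and valid moves have value $(0,0)$ --- or a potential move of value strictly below $\val(B_{k+1})$. The one fact needed about the run is which job types could have been moved onto $i_{k+1}$ while $B_{k+1}$ was still present: none if $B_{k+1}$ is a small blocker, only small jobs if $B_{k+1}$ is a big blocker, and only small and large jobs if $B_{k+1}$ is a medium blocker.

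The case analysis, which is the bulk of the work, runs over the type of $(j_t,i_{k+1})$. If it was a potential small move (of a small, medium, or large job), then $B_{k+1}$ is a small blocker, nothing was moved onto $i_{k+1}$, $j_{k+1}$ left, so $p(\sigma^{-1}(i_{k+1}))$ strictly decreased, no big job has appeared on $i_{k+1}$, and the move is still a potential small move of value $(p(j_t),p(\sigma^{-1}(i_{k+1})))$, strictly smaller in the second coordinate. If it was a huge--to--small move, $B_{k+1}$ is again a small blocker; the inequality $p(j_t)+p(S_{i_{k+1}}(\CompTree_k)\cup M_{i_{k+1}})\le 1+R$ is preserved, because $S_{i_{k+1}}(\CompTree_k)$ is unchanged and the medium jobs on $i_{k+1}$ can only have left, so the move is still huge--to--small (it cannot have become huge--to--medium) with smaller value $(3,p(\sigma^{-1}(i_{k+1})))$, or it has become valid. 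If it was a medium/large--to--big or a huge--to--big move, $B_{k+1}$ is a big blocker with $\JO(B_{k+1})=\{j_B\}$, so $j_{k+1}=j_B$ and the big job has left $i_{k+1}$; only small jobs could have been moved onto $i_{k+1}$, so it now carries no big job, and $(j_t,i_{k+1})$ becomes --- unless already valid --- a potential small move of value $(p(j_t),\cdot)$ with $p(j_t)<2$ (if $j_t$ is medium or large), or a potential huge--to--small move of value $(3,\cdot)$ (if $j_t$ is huge), in either case below $\val(B_{k+1})$, which was $(2,0)$ respectively $(4,0)$. Finally, if it was a huge--to--medium move, $B_{k+1}$ is a medium blocker and $\JO(B_{k+1})$ is the set of medium jobs that were on $i_{k+1}$; since $j_{k+1}$ is one of them and no medium job could have been moved onto $i_{k+1}$, the number of medium jobs on $i_{k+1}$ strictly decreased, no huge job could have appeared on $i_{k+1}$ either, and $p(j_t)+p(S_{i_{k+1}}(\CompTree_k))\le 1+R$ still holds, so $(j_t,i_{k+1})$ is either still huge--to--medium --- now with value $(5,\cdot)$, strictly smaller in the second coordinate --- or, if the remaining medium jobs no longer block it, valid or a potential huge--to--small/huge--to--big move of value $(3,\cdot)$ or $(4,0)$, which is below $(5,\cdot)=\val(B_{k+1})$. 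This exhausts the cases, so the tree vector strictly decreases, and finiteness of the value set completes the proof.

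The main obstacle is exactly this case analysis: one must, for each blocker type, pin down which job types could have been moved onto $i_{k+1}$ during the intervening run of valid moves, verify that the quantities governing the defining inequalities of the move type --- $S_{i_{k+1}}(\CompTree_k)$, the set of medium jobs on $i_{k+1}$, and the presence of a big job on $i_{k+1}$ --- all change in the favorable direction, and rule out the move silently mutating into a different potential type of higher value, in particular keeping the huge--to--small/huge--to--big/huge--to--medium trichotomy straight.
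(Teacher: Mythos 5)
Your proposal is correct and follows essentially the same approach as the paper's proof: the same potential-function argument on the lexicographic tree vector, the same reduction to comparing $\val(j',i')$ against $\val(B_{k+1})$ after a maximal run of valid moves, and the same case split over the type of move that created $B_{k+1}$. In fact, your treatment is a little more careful than the paper's in two spots the paper glosses over: you explicitly rule out a huge-to-small move mutating into a huge-to-medium move (by noting $S_{i_{k+1}}(\CompTree_k)$ is frozen and medium jobs on $i_{k+1}$ can only have left, so the $\leq 1+R$ condition is preserved), and in the huge-to-medium case you account for the possibility that a large job was moved onto $i_{k+1}$, so the move may instead become huge-to-big, which you correctly observe still has smaller value. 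No gaps.
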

\begin{proof}
  As in the proof of Lemma~\ref{lemma:simpleterm}, we consider the $i$'th iteration of
  the algorithm and associate the vector
$$
(\val(B_1), \val(B_2), \dots, \val(B_\ell), \infty).
$$
with $\CompTree$ where the blockers are ordered in the order they were added and $\val(B_i)$ equals the value of the move chosen when $B_i$ was added.
We shall now prove that the value of the vector associated with $\CompTree$ decreases no matter the step chosen in the next iteration.

If a new blocker is added in the $i+1$'th iteration the
lexicographic order clearly decreases as the vector ends with
$\infty$.  It remains to verify what happens when blockers are
removed from $\CompTree$.  In that case let the algorithm run until it
chooses a potential move that is not valid or terminates. As
blockers will be removed in each iteration until it either
terminates or chooses a potential move that is not valid we will
eventually reach one of these cases. If the algorithm terminates we
are obviously done.  

Instead, suppose that starting with $\sigma$ and $\CompTree$ the
algorithm does a sequence of steps where blockers are removed until
we are left with an updated schedule $\sigma_k$, a $\CompTree_k$ with
$k+1< \ell$ blockers, and a potential
move $(j', i')$ that is not valid is chosen.  
As a blocker $B$ is removed if a
blocker added earlier is removed, we have that $\CompTree_k$ equals
the subtree of $\CompTree$ induced by $B_0, B_1, \dots, B_k$.

We will
thus concentrate on comparing the lexicographic value of $(j', i')$
with that of $B_{k+1}$. The value of $B_{k+1}$ equals of the value of
the move chosen when $B_{k+1}$ was added, say $(j_{t}, i_{k+1})$ for
 $j_{t} \in  \JO(B_t)$  with $1 \leq t \leq k$ and $\MA(B_{k+1}) = i_{k+1}$.

A key observation is that since blocker $B_{k+1}$ was removed but
not $B_k$, the most recent move was of a job $j_{k+1}\in \JO(B_{k+1})$ and we have $\sigma(j_{k+1}) = i_{k+1}$ and $\sigma_k(j_{k+1}) \neq i_{k+1}$.
Moreover, as $(j_t, i_{k+1})$ was a potential move when
$B_{k+1}$ was added, it is a potential move with respect to
$\CompTree_k$ (using that $S_{i_{k+1}}\left(\CompTree_k\right)$ has
not changed).  Using these observations we now show that the
lexicographic value of $(j_t, i_{k+1})$ has decreased. As the
algorithm always picks the move of minimum lexicographic value, this
will imply $\val(j',i') < \val(B_{k+1})$ as required.

If $(j_t, i_{k+1})$ was a small or huge-to-small move then $B_{k+1}$
was a small blocker. As no jobs were moved to $i_{k+1}$ after
$B_{k+1}$ was added, $p(\sigma_k^{-1}(i_{k+1})) <
p(\sigma^{-1}(i_{k+1})$ and we have that the lexicographic value
of $(j_t, i_{k+1})$ has  decreased.

If $(j_t, i_{k+1})$ was a medium/large-to-big (or huge-to-big move)  then $j_{k+1}$ must
be a big job and hence the machine $i_{k+1}$ has no longer a big job
assigned. This implies that the move $(j_t, i_{k+1})$ is now a
potential small (or huge-to-small using that no medium jobs are moved to machines in big blockers) with smaller lexicographic value.

Finally, if $(j_t, i_{k+1})$ was a huge-to-medium move then $j_{k+1}$
must be a medium job and since no medium jobs have potential moves to
machines in medium blockers we have that $|\sigma_k^{-1}(i) \cap
\JO_M| < |\sigma^{-1}(i) \cap \JO_M|$ which implies that the
lexicographic value of $(j_t, i_{k+1})$ also decreased in this case.

\hide{

\begin{description}
\item[$B_{k+1}$ was added when a small or medium/large-to-big move was
  chosen:] If $B_{k+1}$ was added because of a medium/large-to-big
  move then $j_{k+1}$ must be a big job and hence in the next iteration
  machine $i_{k+1}$ has no big job assigned. This implies that the
  move $(j_{t}, i_{k+1})$ is now a potential small move and hence
  $\val(j',i') \leq \val(j_{t}, i_{k+1}) < \val(B_{k+1})$.

  Otherwise, if $B_{k+1}$ was added because of a small move then it is
  a small blocker. As no jobs were added to $i_{k+1}$ after
  $B_{k+1}$ was added, $p(\sigma_k^{-1}(i)) < p(\sigma^{-1}(i)) $
  which again implies that the lexicographic value of $(j_{t},
  i_{k+1})$ has decreased which completes this case.

\item[$B_{k+1}$ was added when a huge-to-small, huge-to-big or
  huge-to-medium move was chosen:] Let $\sigma'$ be the schedule when
  $B_{k+1}$ was added. By Observation~\ref{obs:diff}, $B_{k+1}$ was added with respect to $\CompTree_k$ and by definition of huge potential moves we have
  \begin{align*}
    p(j_{t}) + p\left(S'_i\right) \leq 1+R,
  \end{align*}
  where $S'_i= \{j\in \JO(\CompTree)\cap \sigma^{\prime -1}(i): j \mbox{ is small}\}$.
  Moreover, since no job has been moved from a blocker in $\CompTree_k$ (Observation~\ref{obs:diff})
  \begin{align}
    \label{EQ:feas}
    p(j_t) + p\left(S_i\right)   \leq 1+R.
  \end{align}
where  $S_i= \{j\in \JO(\CompTree)\cap \sigma_k^{-1}(i): j \mbox{ is small}\}$ now is with respect to $\sigma_k$.

Suppose first that that $B_{k+1}$ was added because of a
huge-to-medium move. Then $j_{k+1}$ must be a medium job and since no
medium jobs have potential moves to machines in medium blockers we
have that $|\sigma_k^{-1}(i) \cap \JO_M| < |\sigma^{-1}(i) \cap
\JO_M|.  $ Using~\eqref{EQ:feas} we have that $(j_{t}, i_{k+1})$ is
a potential move with respect to $\sigma_k$ and $\CompTree_k$ and since
the number of medium jobs decreased we have that $\val(j', i') \leq
\val(j_{t}, i_{k+1}) < \val(B_{k+1})$.

Second suppose that $B_{k+1}$ was added because of a huge-to-big
move. Then $j_{k+1}$ must be a big job. Using~\eqref{EQ:feas} we again get
that the $(j_{t}, i_{k+1})$ is a potential move and since no medium
jobs are moved to machines in big blockers it has now become a
potential huge-to-small move, which has a smaller lexicographic value.

The final case follows from that $(j_{t}, i_{k+1})$ is still a
potential move and by the same arguments as when $B_{k+1}$ was added
because of a small move.
\end{description}
}

We have thus proved that the vector associated to $\CompTree$ always
decreases. As there are at most $3|\MA|$ blockers in $\CompTree$
(one small, medium, and big blocker   for each machine) and a vector
associated to a blocker can take a finite set of values, we 
conclude that the algorithm eventually terminates.
\end{proof}

\section{Conclusions}
We have shown that the configuration LP gives a polynomial time
computable lower bound on the optimal makespan that is strictly better
than two. Our techniques are mainly inspired by recent developments on
the related Santa Claus problem and gives a local search algorithm to
also find a schedule of the same performance guarantee, but is not
known to converge in polynomial time.

Similar to the Santa Claus problem, this raises the open question
whether there is an efficient rounding of the configuration LP that
matches the bound on the integrality gap (see also~\cite{FFeige08} for
a comprehensive discussion on open problems related to the difference
between estimation and approximation algorithms). Another interesting
direction is to improve the upper or lower bound on the integrality
gap for the restricted assignment problem: we show that it is no worse
than $33/17$ and it is only known to be no better than $1.5$ which
follows from the NP-hardness result. One possibility would be to find
a more elegant generalization of the techniques, presented in
Section~\ref{sec:simple} for two job sizes, to arbitrary processing times
(instead of the exhaustive case distinction presented in this paper).

To obtain a tight analysis, it would be natural to start with  the special
case of graph balancing for which the $1.75$-approximation algorithm
by Ebenlendr et al.~\cite{EKS08} remains the best known.  We remark that the
restriction $p_{ij} \in \{p_j, \infty\}$ is necessary as the
integrality gap of the configuration LP for the general case is known to be $2$ even if a job can be assigned to at most $2$ machines~\cite{VW10}.

\section{Acknowledgements}
I am grateful to Johan H\aa stad for many useful insights and
comments. I also wish to thank Tobias M\"{o}mke and Luk\'{a}\v{s}
Pol\'{a}\v{c}ek for useful comments on the exposition.  This research
is supported by ERC Advanced investigator grant 226203.

\bibliographystyle{abbrv}
\bibliography{main}  
\end{document}